\renewcommand{\theequation}{\thesection\arabic{equation}}
\newtheorem{theorem}{Theorem}
\theoremstyle{definition}
\newtheorem{algorithm}{Algorithm}
\theoremstyle{plain}
\newtheorem*{Lem*}{Lemma}
\newtheorem{Lem}{Lemma}[section]
\numberwithin{equation}{section}
\newcommand{\ignore}[1]{}
\newcommand{\beginsupplement}{%
	\setcounter{table}{0}
	\renewcommand{\thetable}{S\arabic{table}}%
	\setcounter{figure}{0}
	\renewcommand{\thefigure}{S\arabic{figure}}%
}
\newtheorem{assump}{Assumption}
\def\E{{\mathbb{E}}}
\def\X{{\bm X}}
\def\I{{\mathbb I}}
\def\P{{\rm P}}
\begin{document}
	\begin{CJK*}{GBK}{song}
		
		
		\renewcommand{\baselinestretch}{2}
		
%
%
		\renewcommand{\thefootnote}{}
		
		
		\fontsize{12}{14pt plus.8pt minus .6pt}\selectfont \vspace{0.8pc}
		\centerline{\large\bf LEVERAGE CLASSIFIER: ANOTHER LOOK AT }
		\vspace{2pt}
		\centerline{\large\bf SUPPORT VECTOR MACHINE}
		\vspace{.4cm}
		
		\centerline{Yixin Han\textsuperscript{1}, Jun Yu\textsuperscript{2}, Nan Zhang\textsuperscript{3}, Cheng Meng\textsuperscript{4}, Ping Ma\textsuperscript{5}, Wenxuan Zhong\textsuperscript{5}, and Changliang Zou\textsuperscript{1}} \vspace{.4cm}
		
		\centerline{\it \textsuperscript{1}School of Statistics and Data Science, LPMC $\&$ KLMDASR, Nankai University, Tianjin, P.R. China}
		
		\centerline{\it\textsuperscript{2}School of Mathematics and Statistics, Beijing Institute of Technology, Beijing, P.R.China}
		
		\centerline{\it \textsuperscript{3}School of Data Science, Fudan University, Shanghai, P.R.China}
		
		\centerline{\it  \textsuperscript{4}Institute of Statistics and Big Data, Renmin University, Beijing, P.R.China}
		
		\centerline{\it \textsuperscript{5}Department of Statistics, University of Georgia, Athens, GA, USA}

		\vspace{.55cm} \fontsize{9}{11.5pt plus.8pt minus .6pt}\selectfont

		
		\begin{quotation}
			\noindent {\it Abstract:}
			Support vector machine (SVM) is a popular classifier known for accuracy, flexibility, and robustness. However, its intensive computation has hindered its application to large-scale datasets. In this paper, we propose a new optimal leverage classifier based on linear SVM under a nonseparable setting. Our classifier aims to select an informative subset of the training sample to reduce data size, enabling efficient computation while maintaining high accuracy. We take a novel view of SVM under the general subsampling framework and rigorously investigate the statistical properties. We propose a two-step subsampling procedure consisting of a pilot estimation of the optimal subsampling probabilities and a subsampling step to construct the classifier. We develop a new Bahadur representation of the SVM coefficients and derive unconditional asymptotic distribution and optimal subsampling probabilities without giving the full sample. Numerical results demonstrate that our classifiers outperform the existing methods in terms of estimation, computation, and prediction. 
			
			\vspace{9pt}
			\noindent {\it Keywords and phrases:}
			{Classification; Large-scale dataset; Martingale; Optimal subsampling; Support vector machine.}
		\end{quotation}\par
		
		\footnotetext{\noindent { Corresponding author: pingma@uga.edu  (Ping Ma)}}
		
		\def\thefigure{\arabic{figure}}
		\def\thetable{\arabic{table}}
		
		\renewcommand{\theequation}{\thesection.\arabic{equation}}
		\fontsize{12}{14pt plus.8pt minus .6pt}\selectfont
		
		\setcounter{section}{0} 
		\setcounter{equation}{0} 
		
		\lhead[\footnotesize\thepage\fancyplain{}\leftmark]{}\rhead[]{\fancyplain{}\rightmark\footnotesize\thepage}
		
		\section{Introduction}\label{sec:intro}
		
		Consider the binary classification problem for a training sample of size $N$, $\mathcal{D}_N=\{(\X_j, Y_j)\}_{j=1}^N$,
		where  $\X_j\in\mathbb{R}^p$ denotes covariates (a.k.a.features), $Y_j=\left\{1,-1\right\}$ represents class labels. The central task is to build a classifier that predicts the label based on the observed covariates. Numerous literature is available on binary classification procedures, including nearest neighbor classifiers, discriminant analysis, logistic regression, tree-based methods, support vector machine, and ensemble learning. See, for example, \citet{hastie2010elements,fan2020statistical} for a comprehensive review.

		Support vector machine (SVM) is a theoretically motivated classifier and has gained significant popularity in various applications \citep{boser1992training,cortes1995support,vapnik2013nature}. As a margin-based approach, SVM aims to find the maximum-margin hyperplane in either the original or extended kernel feature space. According to the elegant geometric interpretation, only a subset of the training dataset called the \emph{support vectors}, needs to be considered for evaluating the separating hyperplane. This property is attractive compared to likelihood-based classifiers, such as logistic regression, which depend on all training data to determine the discriminative boundary. Moreover, logistic regression is typically fitted under the assumption that the response follows a binomial distribution, whereas SVM does not require any distributional assumption and thus leads to more robust performance \citep{steinwart2008support}. 
		
		Despite the advantages mentioned above, constructing an SVM classifier is computationally intensive as it typically involves solving quadratic programming optimization problems. In general, the computational cost of SVM is $O(N^2 N_s )$ \citep{kaufman1998solving}, where $N_s$ represents the number of support vectors.  In practice,  $N_s$ usually increases linearly with the sample size $N$ of the training data. As a result, the number of support vectors  
		significantly affects the training time and the evaluation of the decision boundary.  Various methods have been proposed to mitigate the computational complexity of training SVM classifiers. For example, specialized algorithms for solving quadratic programming have been suggested, including the sequential minimal optimization \citep{platt1998fast} and various decomposition methods used in the LibLinear software library \citep{hsieh2008dual}. Other fast computation methods based on low-rank approximation \citep{williams2001using}, gradient descent \citep{bordes2005fast,shalev2011pegasos,wang2012breaking}, core set \citep{tsang2005core}, and nearest neighbor \citep{camelo2015nearest} have also been developed.  However, it is worth noting that most of these methods still incur a computational cost of at least ${O}(N^2)$ or lack optimal statistical guarantees. Therefore, when the sample size of the training data is huge, both time complexity and statistical guarantees become prohibitively demanding.
		
		Observing that the discriminative boundary of the SVM depends on only a subset of the training data, we take another look at the SVM from the perspective of data reduction. A crucial insight from the SVM is that a relatively small subset of the training data is sufficient to build up an effective classifier. Inspired by leverage score sampling methods developed for least-squares regression \citep{drineas2011faster,ma2015statistical} and low-rank matrix approximation \citep{mahoney2009cur}, our strategy is to construct an importance sampling distribution for all the training data points, which effectively reduces the data size before constructing the classifier. The nonuniform subsampling strategy we employ is straightforward to design and implement.  As long as the reduced dataset remains informative or representative, the corresponding estimator can provide a satisfactory approximation to the estimator based on the full sample. For example, the statistical leveraging framework \citep{drineas2012fast,ma2015statistical,ma2022asymptotic,li2020modern} has achieved great success in large-scale ordinary least squares regression.  
		More recently, optimal subsampling procedures have been also established for various statistical models, including logistic regression \citep{wang2018optimal}, generalized linear models \citep{ai2018optimal,yu2022optimal}, quantile regression \citep{wang2021optimal}, nonparametric regression \citep{ma2015efficient,meng2020more,meng2021lowcon}, and designed for testing problems \citep{ren2022large, Han2023model}. However, none of the existing can be directly applied to SVM due to its distinguishing geometric feature.  Consequently, our goal is to develop a leverage classifier that is computationally efficient for large datasets and theoretically provable as the SVM.
		
		In this paper, we introduce a novel binary classifier based on linear SVM in a nonseparable setting. To construct the optimal classifier, we propose a two-step optimal subsampling algorithm that involves a pilot study to estimate the optimal subsampling probabilities and a subsampling step. Our subsampling procedure significantly reduces the computational costs without scarfing too much estimation efficiency.  With a novel view of the SVM under the general subsampling framework, we rigorously investigate the statistical properties of the proposed classifier. Specifically, we derive the asymptotic distribution and the optimal subsampling probabilities. Our contributions can be summarized as follows:
		\begin{itemize}
			\item [(1)] Double randomnesses are addressed: one arising from the training data and the other from the subsampling procedure. Our approach yields an unconditional asymptotic result regardless of the full sample and thus allows for random subsampling probabilities.
			
			\item [(2)] We utilize the martingale technique as observations in the selected samples are no longer independent. Our theoretical framework builds upon the Bahadur representation of the linear SVM estimator, which is nonstandard in the context of the general subsampling strategy. 
			
			\item [(3)] The nonuniform subsampling probabilities are computed by minimizing specific criteria derived from the asymptotic variance, leading to  optimality within the experimental design theory. Numerical results also demonstrate that our leverage classifier is computationally fast, and the identified separating hyperplane is close to that obtained using the full sample SVM.
		\end{itemize}

		The remainder of this paper is organized as follows. Section \ref{background} reviews the linear SVM for nonseparable binary classification and motivates the leverage classifier framework. Section \ref{method} investigates the theoretical properties of leverage classifiers and develops efficient algorithms for constructing optimal leverage classifiers. Simulation studies and a real-world example are presented in Sections~\ref{Simulation}--\ref{CASP}. Section \ref{conclusion} concludes the paper with some potential improvements. All theoretical proofs and additional numerical results are provided in the Supplementary Material. The implementing codes for this work are available in \url{https://github.com/yuxiaohaihyx0517/Leverage-Classifier}.
		
		\section{Support vector machine and leverage classifier}\label{background}

		\subsection*{2.1~~~Support vector machine}
		
		Binary linear classification problem aims to find the best separating hyperplane of the form
		$f(\X,\bm\beta)=\beta_0+\X^{\top}{\bm\beta}_1,$ with intercept $\beta_0$ and slope vector ${\bm\beta}_1$. Write ${\bm\beta}=\left(\beta_0,{\bm\beta}_1^\top\right)^\top\in\mathbb{R}^{p+1}$ and $\widetilde{{\X}}=\left(1,{\X}^\top\right)^\top\in\mathbb{R}^{p+1}$ as the augmented parameter and data vectors, and then $f({\X},\bm\beta)=\widetilde{{\X}}^\top\bm\beta$. When the training data are not linearly separable, the linear SVM solves the following optimization problem
		\begin{align}\label{LinearSVMmargin}
			\widehat{\bm\beta} = &\mathop{\arg\min}_{\bm\beta\in\mathbb{R}^{p+1}}\left\{\frac{1}{N}\sum\limits_{j=1}^N\left[1-Y_jf({\X}_j,\bm\beta)\right]_{+}+\frac{{\lambda_{\textrm{FULL}}}}{2}\|\bm\beta_1\|^2\right\},
		\end{align}
		where $\left[u\right]_{+}=\max(u,0)$ is the hinge loss function, $\|\cdot\|$ denotes the Euclidean norm of a vector, and the tuning parameter $\lambda_{\textrm{FULL}}>0$ controls the amount of regularization on model complexity. 
		
		From the theoretical perspective, \cite{koo2008bahadur} investigated the asymptotic behavior of the coefficient of the linear SVM. Denote the population version of the loss function in \eqref{LinearSVMmargin} without penalty by
		$L(\bm\beta)= {\E}\left[1-Yf({{\X}},\bm\beta)\right]_+,$
		and its minimizer $\bm\beta^{\dagger}=\arg\min_{\bm\beta}L(\bm\beta)$.
		Define
		\begin{align*}
			{\bm S}(\bm\beta)=-{\E}\left\{{\I}\left(Yf({{\X}},\bm\beta)\leq 1\right)Y\widetilde{{\X}}\right\},\quad
			{\bf H}(\bm\beta)={\E}\left\{\psi \left(1-Yf({{\X}},\bm\beta)\right)\widetilde{{\X}}\widetilde{{\X}}^\top\right\},
		\end{align*}
		where ${\I}(\cdot)$ is the indicator function and $\psi(\cdot)$ is the Dirac delta function.
		Provided that ${\bm S}(\bm\beta)$ and ${\bf H}(\bm\beta)$ are well-defined \citep{koo2008bahadur}, they are interpreted as the gradient and Hessian matrix of $L(\bm\beta)$. Subsequently, under regularity conditions, $\widehat{\bm\beta}$ satisfies
		\begin{align}\label{fullasynormal}
			\sqrt{N}(\widehat{\bm\beta}-{\bm\beta}^{\dagger}) {\rightarrow} \mathcal{N}\left({\bf 0},{\bf H}(\bm\beta^{\dagger})^{-1}{\E}\{{\I}(Yf({{\X}},\bm\beta^{\dagger})\leq 1)\widetilde{{\X}}{\widetilde{{\X}}}^\top\}{\bf H}(\bm\beta^{\dagger})^{-1}\right).
		\end{align}

		From an optimization perspective, the representer theorem \citep{kimeldorf1971some,scholkopf2001generalized} states that the solution to the quadratic programming in \eqref{LinearSVMmargin} admits a finite-dimensional expression of basis functions. In general, solving a quadratic programming optimization problem has a computational cost of ${O}(N^3)$ \citep{mehrotra1992implementation,chang2011psvm}, which becomes prohibitively expensive when the training data size $N$ is large. However, in the case of the linear SVM,  a significant fraction of the basis coefficients can be zero. The training data associated with the nonzero basis coefficients are called support vectors, which play a crucial role in determining the discriminative boundary. As a result, the computational cost is significantly reduced as the number of support vectors is much smaller than the training sample size, making it more feasible for large-scale datasets.
		
		
		\subsection*{2.2~~~Leverage classifier~~}\label{Generalsubsampling}
		
		Inspired by the appealing property of support vectors, we revisit the SVM and develop a new classifier called leverage classifier. Our strategy first selects an informative subset of the training data with some nonuniform subsampling probabilities and then constructs the linear SVM classifier based on the reduced dataset. The leverage classifier integrates leverage score sampling with the margin-based classifier, and its advantage is to approximate the discriminative boundary well with significantly reduced computational cost. In our subsampling framework, we employ the subsampling with replacement strategy to ensure theoretical convenience. The detailed procedure of the leverage classifier is described in Algorithm \ref{GeneralAlgorithm}.
		
		\begin{algorithm}[htbp]
			\caption{Leverage classifier.}
			\begin{algorithmic}\label{GeneralAlgorithm}
				\STATE \textbf{Step 1} Assign subsampling probabilities ${\bm\pi}=\{\bm\pi_j\}_{j=1}^N$ to all training samples in $\mathcal{D}_N$;
				
				\STATE \textbf{Step 2}
				Draw a subset of size $n\ll N$ from $\mathcal{D}_N$ according to ${\bm\pi}$ via subsampling
				with replacement. Denote the subsample by $\mathcal{S}_n=\{({\X}_i^*,Y_i^*)\}_{i=1}^n$ and the corresponding subsampling
				probabilities by $\bm\pi^*=\{\bm\pi^*_i\}_{i=1}^n$; \medskip
				
				\STATE \textbf{Step 3}  Use $\mathcal{S}_n$ to train the linear SVM by minimizing the penalized weighted hinge loss with a properly tuned parameter $\lambda$
				\begin{align*}
					\widetilde{{\bm\beta}} = \mathop{\arg\min}_{\bm\beta\in\mathbb{R}^{p+1}}\left\{\frac{1}{n}\sum\limits_{i=1}^n\frac{\left[1-Y_i^*f({{\X}}_i^*,\bm\beta)\right]_{+}}{N\bm\pi_i^*}+\frac{{\lambda}}{2}\|\bm\beta_1\|^2\right\}.
				\end{align*}
				\STATE \textbf{Step 4} The separating hyperplane is $f({\X},\widetilde{\bm\beta})=\widetilde{{\X}}^\top\widetilde{\bm\beta}$.
			\end{algorithmic}
		\end{algorithm}
		
		The performance of the leverage classifier relies on the subsampling probability $\bm\pi$, the subsample size $n$, and the tuning parameter $\lambda$. First, the reduced dataset $\mathcal S_n$ is obtained according to $\bm\pi$. A simple choice, $\pi_j=N^{-1}$, leads to uniform subsampling. 
		Although this strategy is useful for exploratory data analysis, it often fails to extract important information by ignoring the distinctive characteristics of statistical models. Recent studies on logistic regression \citep{wang2018optimal} and quantile regression \citep{wang2021optimal} have highlighted the importance of designing nonuniform subsampling strategies. Our subsequent analysis reveals that the leverage classifier, with carefully designed $\bm\pi$, can attain a certain level of optimality in terms of experimental design. Second, \citet{kaufman1998solving} pointed out that the number of support vectors typically  increases linearly with the training sample size.  As a result, the leverage classifier with $\mathcal S_n$ of size $n$ offers a more efficient computational approach compared to the SVM utilizing the full sample size $N$.  Lastly, training the leverage classifier involves tuning parameter selection, which differs from the aforementioned literature. We employ the Generalized Approximate Cross-Validation method (GACV). Specifically, minimize objective function $N^{-1}\sum\nolimits_{k=1}^N[1-Y_kf_\lambda^{[-k]}(\X_k,\bm\beta)]_+$, where $f_\lambda^{[-k]}(\X_k,\bm\beta)$ is the SVM solution with $k$-th data point removed. This objective function stems from the penalized likelihood estimates in SVM and serves as a generalization of the generalized cross-validation. 	GACV does not need to train and test every possible hyperparameter combination and thus is a computationally efficient method. See \citet{wahba2003optimal} for its optimal properties and implementation details.

		Before proceeding with theoretical analysis, we provide a toy example to illustrate the intuition of the leverage classifier. Please refer to Section~\ref{Simulation} for the implementation details. In Figure~\ref{toy}, 
		the right panel showcases the best separating hyperplane determined solely by the support vectors associated with the full sample SVM. The left panel displays the leverage classifier with A-optimality (explained in Section~\ref{method}), which tends to select data points close to the full sample support vectors, resulting in a reduced dataset that is informative in identifying the discriminative boundary.  
		In contrast, the middle panel demonstrates the uniform subsampling strategy, which overlooks the characteristics of the full sample support vectors. As a result, the selected subsample is less informative.  Unless the subsample size $n$ is relatively large, the uniform subsampling strategy will be inferior to a carefully designed nonuniform subsampling strategy used by the leverage classifier.
		
		\begin{figure}[htbp]
			\centering
			\includegraphics[width=1\textwidth]{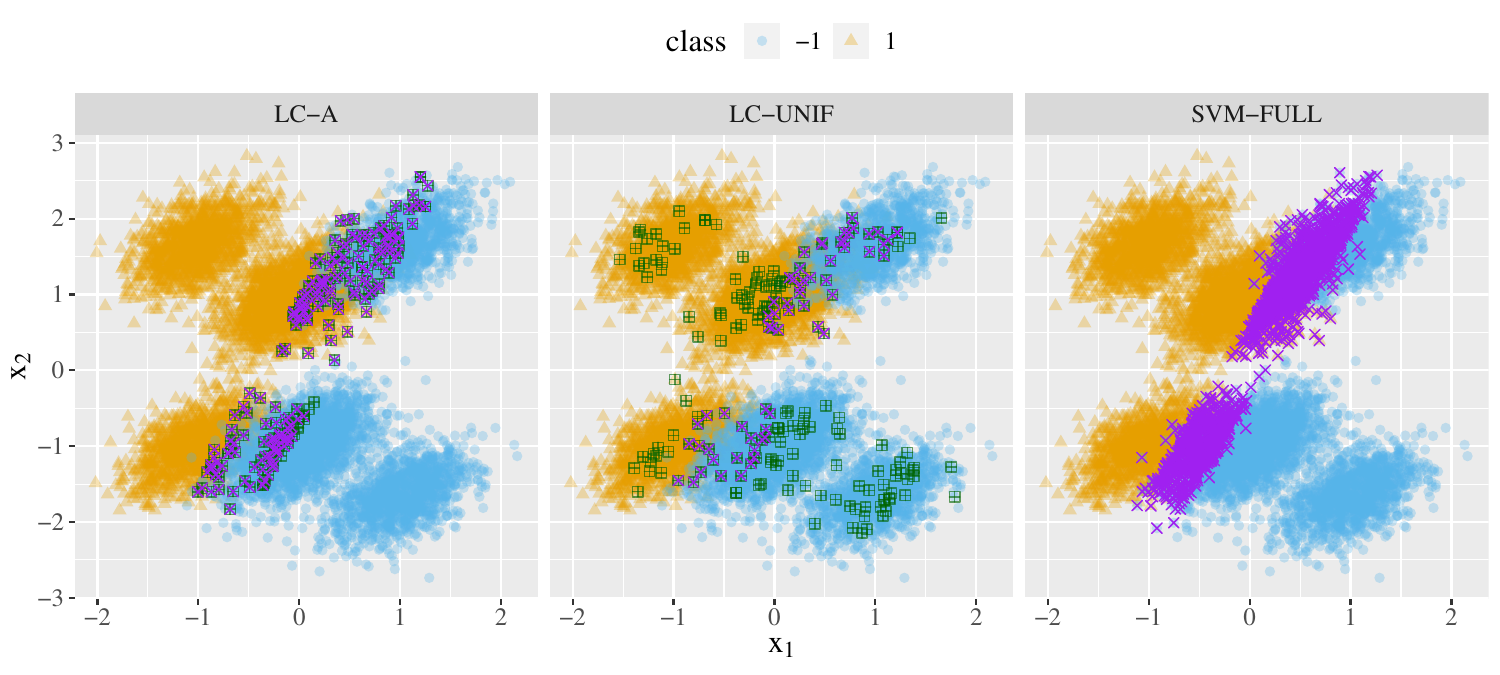}\par
			\vspace{-0.2cm}
			\caption{Toy example for linear classification. Classifiers are the proposed optimal leverage classifier with A-optimality (LC-A), the leverage classifier with uniform subsampling (LC-UNIF), and the full sample linear SVM (SVM-FULL). The green {\tiny $\boxplus$}'s denote the selected subsamples, and the purple {$\times$}'s denote the support vectors.}\label{toy}
		\end{figure}

		\section{Theoretical properties and optimal leverage classifier}\label{method}
		
		In this section, we establish theoretical properties and provide an efficient algorithm for the proposed leverage classifiers under the subsampling framework. 
		
		\subsection*{3.1~~~Asymptotic normality}\label{TheoryNormality}

		\begin{assump}\label{density}
			The conditional densities of ${{\X}}$ given class $Y=1$ and $Y=-1$ with respect to the Lebesgue measure are continuous and have finite fourth moments.
		\end{assump}
		\begin{assump}\label{betadiff}
			The covariates for the two classes have different mean values in at least one dimension.
		\end{assump}
		\begin{assump}\label{derivativeHS}
			The nonzero minimizer $\bm\beta^{\dagger}$ of $L(\bm\beta)$ is unique and satisfies that ${S}({\bm\beta}^{\dagger})=0$.
			${\bf H}(\bm\beta)$ is positive-defined around $\bm\beta^{\dagger}$ in a compact set $\mathcal{B}$ with a nonzero radius.
		\end{assump}
		\begin{assump}\label{approxexpansion}
			The subsampling probabilities satisfy that
			\begin{align*}
				\frac{1}{N^3}\sum\limits_{j=1}^N\mathbb{E}\left(\frac{1}{\pi_j^2}\right)=O(1).
			\end{align*}
		\end{assump}
		
		Assumptions \ref{density}--\ref{derivativeHS} are commonly imposed to establish the asymptotic normality of the linear SVM, and they typically hold under the regularity conditions outlined in \citet{koo2008bahadur}. Assumption~\ref{approxexpansion} allows for random subsampling probabilities since the full dataset is not fixed. Furthermore, Assumption~\ref{approxexpansion} restricts $\bm\pi$ from being extremely small,  preventing any training sample from dominating the weighted penalized hinge loss function in Step~3 of Algorithm \ref{GeneralAlgorithm}.  
		When we condition on the full dataset, Assumption \ref{approxexpansion} is in the similar spirit of the commonly used subsampling schemes, for example, \citet{ai2018optimal,wang2018optimal}.
		
		\begin{theorem}[The Bahadur representation]\label{Bahadur-type representation}
			Suppose Assumptions \ref{density}--\ref{approxexpansion} hold. For $\lambda = o(n^{-1/2})$, we have
			\begin{equation}\label{eq:Bahadur-rep}
				\sqrt{n}(\widetilde{\bm\beta}- {\bm\beta^{\dagger}}) = -\frac{1}{\sqrt{n}}{ \bf H}({\bm\beta}^{\dagger})^{-1}\sum\limits_{i=1}^n\frac{1}{N\pi_i^*}\xi_i^*Y_i^*\widetilde{{{\X}}}_i^*+o_{P}\left(1\right),
			\end{equation}
			where $\xi_i^*={\I}\left(Y_i^*f({{\X}}_i^*,{\bm\beta}^{\dagger})\leq 1\right)$ and $\widetilde{{\X}}_i^*=\left(1,{{\X}_i^{*\top}}\right)^\top $, $i=1,\ldots,n$.
		\end{theorem}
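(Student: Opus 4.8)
The plan is to recast Step~3 of Algorithm~\ref{GeneralAlgorithm} as the minimization of a convex random field and then invoke a convexity (quadratic-approximation) argument, which is the natural way to bypass the non-differentiability of the hinge loss, exactly as in the full-sample analysis of \citet{koo2008bahadur}. Writing $\bm\theta = \sqrt{n}(\bm\beta - \bm\beta^{\dagger})$ and $r_i^* = 1 - Y_i^* f(\X_i^*, \bm\beta^{\dagger})$, I would study the recentered objective
\begin{equation*}
\Gamma_n(\bm\theta) = \sum_{i=1}^n\frac{1}{N\pi_i^*}\left\{\left[r_i^* - \tfrac{1}{\sqrt{n}}Y_i^*\widetilde{\X}_i^{*\top}\bm\theta\right]_+ - \left[r_i^*\right]_+\right\} + \frac{n\lambda}{2}\left(\left\|\bm\beta_1^{\dagger} + \tfrac{1}{\sqrt{n}}\bm\theta_1\right\|^2 - \left\|\bm\beta_1^{\dagger}\right\|^2\right),
\end{equation*}
which is convex in $\bm\theta$ and is minimized at $\widehat{\bm\theta} = \sqrt{n}(\widetilde{\bm\beta} - \bm\beta^{\dagger})$. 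It then suffices to show that $\Gamma_n$ converges pointwise to a quadratic with positive-definite curvature, since convexity upgrades this pointwise convergence to a statement about the argmins without any uniform control.

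First I would expand each hinge increment by the exact identity $[r - s]_+ - [r]_+ = -s\,\I(r>0) + \int_0^s[\I(r\le t) - \I(r\le 0)]\dif t$. Summing against the inverse-probability weights splits $\Gamma_n(\bm\theta)$ into a linear part $-\bm\theta^\top W_n$ with
\begin{equation*}
W_n = \frac{1}{\sqrt{n}}\sum_{i=1}^n\frac{1}{N\pi_i^*}\xi_i^* Y_i^*\widetilde{\X}_i^*,
\end{equation*}
a second-order remainder $D_n(\bm\theta)$ built from the integral term, and the penalty contribution $\sqrt{n}\lambda\,\bm\beta_1^{\dagger\top}\bm\theta_1 + \tfrac{\lambda}{2}\|\bm\theta_1\|^2$, which is $o(1)$ for fixed $\bm\theta$ precisely because $\lambda = o(n^{-1/2})$. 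Here I use that $\xi_i^* = \I(r_i^*\ge 0)$ and $\I(r_i^*>0)$ agree off a null set under Assumption~\ref{density}.

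The two substantive limits are $W_n = O_P(1)$ and $D_n(\bm\theta)\xrightarrow{P}\tfrac12\bm\theta^\top{\bf H}(\bm\beta^{\dagger})\bm\theta$. For the first, I would bound $\E\|W_n\|^2$ by conditioning on $\mathcal D_N$: the $n$ summands become i.i.d., their conditional mean is $\sqrt{n}\,N^{-1}\sum_{j}\xi_j Y_j\widetilde{\X}_j$, which is centered because $\E[\xi Y\widetilde{\X}] = -{\bm S}(\bm\beta^{\dagger}) = 0$ by Assumption~\ref{derivativeHS}, and the conditional second moment generates the trace of $N^{-2}\sum_j\xi_j\widetilde{\X}_j\widetilde{\X}_j^\top/\pi_j$; two applications of Cauchy--Schwarz with the fourth-moment bound in Assumption~\ref{density} and the weight control $N^{-3}\sum_j\E(\pi_j^{-2})=O(1)$ in Assumption~\ref{approxexpansion} bound this by $O(1)$. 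For the remainder I would pass to expectations directly: since $\E([r-s]_+-[r]_+) = L(\bm\beta^{\dagger}+\bm\theta/\sqrt{n}) - L(\bm\beta^{\dagger})$, the unconditional mean of the loss part of $\Gamma_n$ equals $n[L(\bm\beta^{\dagger}+\bm\theta/\sqrt{n}) - L(\bm\beta^{\dagger})]$, and a second-order Taylor expansion of $L$ at $\bm\beta^{\dagger}$ — whose gradient is $\bm S$ and whose Hessian is ${\bf H}$ (the Dirac-delta $\psi$ in the definition of ${\bf H}$ being the conditional density of the margin residual at the boundary, as established by \citet{koo2008bahadur}) together with $\bm S(\bm\beta^{\dagger})=0$ — yields $\E[D_n(\bm\theta)]\to\tfrac12\bm\theta^\top{\bf H}(\bm\beta^{\dagger})\bm\theta$. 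Concentration then follows from the total-variance decomposition $\mathrm{Var}(D_n) = \E[\mathrm{Var}(D_n\mid\mathcal D_N)] + \mathrm{Var}(\E[D_n\mid\mathcal D_N])$, each piece being $o(1)$ under the same moment and Assumption~\ref{approxexpansion} bounds, because the integrand is supported on margins within $O(n^{-1/2})$ of the boundary and is therefore second order.

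The hard part will be controlling $W_n$ and $D_n$ \emph{unconditionally}: because $\bm\pi$ is produced by a pilot step it is random, and once we refuse to condition on $\mathcal D_N$ the selected observations are no longer independent, so the purely conditional arguments of \citet{wang2018optimal} and \citet{ai2018optimal} do not transfer directly. This is where I would introduce the martingale device, expressing the two nested sources of randomness (the draw of $\mathcal D_N$ and the with-replacement subsampling) as a martingale array so that the two variance pieces above can be bounded and, in the companion asymptotic-normality statement, a martingale central limit theorem can be applied to $W_n$. Granting the two limits, $\Gamma_n(\bm\theta)\xrightarrow{P}-\bm\theta^\top W_n + \tfrac12\bm\theta^\top{\bf H}(\bm\beta^{\dagger})\bm\theta$ pointwise, with convex left-hand side and positive-definite ${\bf H}(\bm\beta^{\dagger})$ by Assumption~\ref{derivativeHS}; the convexity lemma then transfers this to the argmins and gives $\widehat{\bm\theta} = {\bf H}(\bm\beta^{\dagger})^{-1}W_n + o_P(1)$, which is the Bahadur representation \eqref{eq:Bahadur-rep}.
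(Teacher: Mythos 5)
Your proposal is correct and follows essentially the same route as the paper's own proof: the recentered convex objective $\Gamma_n$ is exactly the paper's $\Lambda_n$, the penalty is killed by $\lambda=o(n^{-1/2})$ in the same way, the quadratic limit of the mean comes from the same Taylor expansion of $L$ at $\bm\beta^{\dagger}$ (Lemma~3 of Koo et al.), the variance of the remainder is killed by the same device (condition on $\mathcal{D}_N$, then Cauchy--Schwarz against Assumption~\ref{approxexpansion} and the fourth-moment bound in Assumption~\ref{density}), and the finish is Pollard's convexity lemma. Your Knight-identity decomposition is only a cosmetic repackaging: the paper's remainder $R_{i,n}$ and its bound via $|z-a|\,{\I}(|1-a|\le|z-a|)$ is precisely the bound on your integral term. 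One small genuine difference is in your favor: you obtain tightness of $W_n$ by a direct conditional second-moment bound, whereas the paper forward-references its martingale CLT lemma (proved for Theorem~\ref{martingalenormality}) to assert that $\bm\kappa_n$ is asymptotically normal; since the compact-ball argument only needs $\bm\kappa_n=O_P(1)$, your route makes Theorem~\ref{Bahadur-type representation} self-contained, though the martingale machinery is needed anyway for the companion normality result.

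One caution on signs. Your conclusion $\widehat{\bm\theta}={\bf H}(\bm\beta^{\dagger})^{-1}W_n+o_P(1)$, with your \emph{positively}-signed $W_n=n^{-1/2}\sum_i(N\pi_i^*)^{-1}\xi_i^*Y_i^*\widetilde{\X}_i^*$, agrees with the paper's own proof (whose final line is $-\sqrt{n}{\bf H}(\bm\beta^{\dagger})^{-1}\bm W_n$ with their \emph{negatively}-signed $\bm W_n$) and with the Newton heuristic $\widetilde{\bm\beta}-\bm\beta^{\dagger}\approx-{\bf H}^{-1}\nabla L_n(\bm\beta^{\dagger})$, where $\nabla L_n(\bm\beta^{\dagger})=-n^{-1}\sum_i(N\pi_i^*)^{-1}\xi_i^*Y_i^*\widetilde{\X}_i^*$. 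It is therefore \emph{opposite} in sign to the displayed equation \eqref{eq:Bahadur-rep}, whose leading minus sign appears to be a typo in the statement (harmless downstream, since the asymptotic variance in Theorem~\ref{martingalenormality} is unaffected). So your proof is right, but you should not assert that your formula literally ``is'' \eqref{eq:Bahadur-rep} without flagging this sign discrepancy.
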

		
		Theorem~\ref{Bahadur-type representation} presents a Bahadur representation of $\widetilde{\bm\beta}$ for the leverage classifier under the subsampling framework, which is the building block for establishing the asymptotic normality. As discussed in \citep{koo2008bahadur}, the condition $\lambda = o(n^{-1/2})$ is an appropriate order for nonseparable SVM, and additional simulation results confirm the rationality of this condition. The use of subsampling with replacement and the integration of the subsampling probability makes Theorem~\ref{Bahadur-type representation} a nontrivial extension of \citet{koo2008bahadur}, which only considered SVMs learned from independent and identically distributed data. The Bahadur representation reveals how the subsampling strategy and margins of the optimal separating hyperplane determine the statistical behavior of the estimator.

		Next, we establish the unconditional asymptotic normality of $\widetilde{\bm\beta}$ based on the Bahadur representation. To this end, we define  ${\bm T}={n}^{-1}\sum_{i=1}^n(N\pi_i^*)^{-1}\xi_i^*Y_i^*\widetilde{{\X}}_i^*$ as a term on the right hand side of \eqref{eq:Bahadur-rep}.  As Algorithm~\ref{GeneralAlgorithm} conducts subsampling with replacement, the data in the reduced dataset $\mathcal S_n$ are no longer independent unless conditioned on the full training sample. Hence, we treat the subsampling procedure as a stochastic process and employ the martingale technique to study the asymptotic property of $\bm T$.  Let {${{\X}}_1^N=\left({{\X}}_1,\ldots,{{\X}}_N\right)$} and ${{Y}}_1^N=\left({{Y}}_1,\ldots,{{Y}}_N\right)$. Step~2 in Algorithm~\ref{GeneralAlgorithm} can be viewed as a $n$-step sequential sampling procedure: in the $i$-th step, we select one data point with replacement from the full training sample and denote it by $({\X}_i^*, Y_i^*)$.  Let $\sigma(*_i)$ be the $\sigma$-algebra \citep{durrett2019probability} generated by the $i$-th sampling step, which is closed under complement, countable unions, and countable intersections.  Accordingly, we thus define a filtration as $\mathcal{F}_{N,0}=\sigma\left({{\X}}_1^N,{Y}_1^N\right)$ and $\mathcal{F}_{N,i}=\sigma\left({{\X}}_1^N,{Y}_1^N\right)\vee\sigma\left(*_1\right)\vee\cdots\vee\sigma\left(*_i\right)$ for $i=1,\dots,n$. This filtration $\mathcal{F}_{N, i}$ be explained as the smallest $\sigma$-algebra containing all the information after the $i$-th sampling step. Based on this filtration, we define ${\bm M}=\sum_{i=1}^n {\bm M}_i$, where
		\begin{align*}
			{\bm M}_i=\frac{1}{nN\pi_i^*}\xi_i^*Y_i^*\widetilde{{\X}}_i^*-\frac{1}{nN}\sum\limits_{j=1}^{N}\xi_jY_j\widetilde{{\X}}_j.
		\end{align*}
		We can express $\bm T=\bm M+ \bm Q$ with ${\bm Q}={N}^{-1}\sum_{j=1}^N\xi_jY_j\widetilde{{\X}}_j$, where above decomposition allows for decoupling the variabilities from the sampling procedure and the full dataset, which are measured by $\bm M$ and $\bm Q$, respectively. In the Supplementary Material, we demonstrate that $\left\{{ \bm M}_i,i=1,\ldots,n\right\}$ forms a martingale difference sequence adapted to filtration $\left\{\mathcal{F}_{n,i},i=1,\ldots,n\right\}$. Using the martingale central limit theorem \citep{ohlsson1989asymptotic}, we establish the unconditional asymptotic normality of $\widetilde{\bm\beta}$.  
		
		\begin{theorem}[Asymptotic normality]\label{martingalenormality}
			Suppose Assumptions \ref{density}--\ref{approxexpansion} hold. Then the variance of $\bm T$, denoted by ${\bf V}_T$, can be written as
			\begin{align*}
				{\bf V}_{T} = \frac{1}{nN^2}\sum\limits_{j=1}^N\E_{Y\mid \X}\left(\frac{1}{\pi_j}{\I}\left(Y_jf({\X}_j,\bm\beta^{\dagger})\leq 1\right)\widetilde{{\X}}_j\widetilde{{\X}}_j^\top\right)+{ {\bf C}},
			\end{align*}
			where $\bf C$ is a constant matrix that does not depend on $\bm\pi$. As $N\to\infty$, $n\to\infty$, 
			we have
			\begin{align*}
				{\bf V}^{-1/2}(\widetilde{\bm\beta}-{\bm\beta}^{\dagger}) {\rightarrow} \mathcal{N}({\bf 0}, {\bf I}_{p+1}),
			\end{align*}
			in distribution, where ${\bf V}={ \bf H}({\bm\beta}^{\dagger})^{-1}{\bf V}_{ T}{\bf H}({\bm\beta}^{\dagger})^{-1}$ and ${\bf I}_{p+1}$ is the identity matrix of dimension $p+1$.
		\end{theorem}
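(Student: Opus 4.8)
The plan is to reduce the statement to a central limit theorem for the single vector $\bm T$ and then transport it to $\widetilde{\bm\beta}$ via the Bahadur representation of Theorem~\ref{Bahadur-type representation}. Since that representation reads $\sqrt{n}(\widetilde{\bm\beta}-\bm\beta^{\dagger})=-\sqrt{n}\,{\bf H}(\bm\beta^{\dagger})^{-1}\bm T+o_{P}(1)$ and the claimed covariance is ${\bf V}={\bf H}(\bm\beta^{\dagger})^{-1}{\bf V}_{T}{\bf H}(\bm\beta^{\dagger})^{-1}$, it suffices to establish (i) the stated formula for ${\bf V}_{T}=\mathrm{Var}(\bm T)$ and (ii) ${\bf V}_{T}^{-1/2}\bm T\to\mathcal{N}({\bf 0},{\bf I}_{p+1})$. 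Part (ii) then yields the theorem because $-{\bf H}(\bm\beta^{\dagger})^{-1}$ is a fixed nonsingular linear map, ${\bf V}^{-1/2}$ is of order $\sqrt{n}$, and so ${\bf V}^{-1/2}$ applied to the $o_{P}(n^{-1/2})$ remainder is $o_{P}(1)$; Slutsky's theorem finishes the argument. Throughout I write $g_j=\xi_j Y_j\widetilde{\X}_j$ and exploit $g_jg_j^{\top}=\xi_j\widetilde{\X}_j\widetilde{\X}_j^{\top}$, which holds because $\xi_j\in\{0,1\}$ and $Y_j^2=1$.

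For (i) I would use the decomposition $\bm T=\bm M+\bm Q$ together with the law of total variance relative to $\mathcal{F}_{N,0}=\sigma(\X_1^N,Y_1^N)$. Because $\bm Q$ is $\mathcal{F}_{N,0}$-measurable and each $\bm M_i$ satisfies $\E(\bm M_i\mid\mathcal{F}_{N,i-1})={\bf 0}$ (the weight $1/(N\pi_i^*)$ makes the conditional mean of the first summand equal to $N^{-1}\sum_j g_j$, cancelling the second), the cross term $\mathrm{Cov}(\bm M,\bm Q)$ vanishes, giving $\mathrm{Var}(\bm T)=\mathrm{Var}(\bm M)+\mathrm{Var}(\bm Q)$. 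Conditionally on $\mathcal{F}_{N,0}$ the $n$ draws are i.i.d., so $\E(\bm M\bm M^{\top}\mid\mathcal{F}_{N,0})=\tfrac{1}{nN^2}\sum_{j=1}^N\pi_j^{-1}g_jg_j^{\top}-\tfrac1n\bm Q\bm Q^{\top}$; taking expectations, substituting $g_jg_j^{\top}=\xi_j\widetilde{\X}_j\widetilde{\X}_j^{\top}$, and invoking the tower property over $Y\mid\X$ produces the displayed $\bm\pi$-dependent leading term, while the remaining piece $(1-n^{-1})\mathrm{Var}(\bm Q)$ is free of $\bm\pi$ and is collected into ${\bf C}$.

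For (ii) the workhorse is the martingale central limit theorem applied to the difference array $\{\bm M_i,\mathcal{F}_{N,i}\}$. After confirming the martingale-difference property I would verify the two standard hypotheses: convergence of the predictable quadratic variation, i.e. that $\sum_{i=1}^n\E(\bm M_i\bm M_i^{\top}\mid\mathcal{F}_{N,i-1})$ converges, after normalization by ${\bf V}_T$, to the identity, which follows from the conditional-variance computation in part (i) combined with a law of large numbers over the full sample; and a conditional Lindeberg condition, for which I would pass to the Lyapunov form and bound $\sum_i\E(\|\bm M_i\|^{2+\delta}\mid\mathcal{F}_{N,i-1})$. The contribution of $\bm Q$, being the average of $N$ i.i.d. mean-zero terms (note $\E g_1=-{\bm S}(\bm\beta^{\dagger})={\bf 0}$ by Assumption~\ref{derivativeHS}), is of order $N^{-1/2}=o_P(n^{-1/2})$ under $n\ll N$ and is absorbed by Slutsky, so the limiting law of ${\bf V}_T^{-1/2}\bm T$ is driven entirely by $\bm M$.

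The main obstacle is the conditional Lindeberg/Lyapunov step, precisely because Assumption~\ref{approxexpansion} permits random and possibly small subsampling probabilities: the weights $1/(N\pi_j)$ can be large, so controlling $\sum_i\E(\|\bm M_i\|^{2+\delta}\mid\mathcal{F}_{N,i-1})$ requires matching the inverse-probability moments against the finite fourth moments of $\widetilde{\X}$ from Assumption~\ref{density}, and this is exactly where the budget $N^{-3}\sum_j\E(\pi_j^{-2})=O(1)$ of Assumption~\ref{approxexpansion} is consumed. A secondary technical point is that, because the $\pi_j$ are themselves random, the predictable quadratic variation is a genuinely random matrix whose convergence must be established in probability rather than by a deterministic computation, so the double randomness has to be handled simultaneously in both CLT hypotheses.
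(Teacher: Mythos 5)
Your part (i) is essentially the paper's own computation (law of total variance, the martingale-difference property of the $\bm M_i$ as in Lemma \ref{MDS}, and the vanishing cross term; cf.\ Appendix D), and your transport of the CLT for $\bm T$ through the Bahadur representation of Theorem \ref{Bahadur-type representation} is also the paper's route. The genuine gap is in part (ii), where you discard $\bm Q$ by arguing ${\bf V}_T^{-1/2}\bm Q = O_P(\sqrt{n/N}) = o_P(1)$ ``under $n\ll N$''. Theorem \ref{martingalenormality} is stated for $N\to\infty$, $n\to\infty$ with no constraint on the ratio $n/N$, and the paper's remark immediately after the theorem (``when $n/N\to 0$, the variability from the full dataset is insignificant'') makes clear that $n/N\to 0$ is a special case, not a standing hypothesis. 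If instead $n/N\to c>0$, your argument collapses: $\mathrm{Var}(\bm M)\asymp n^{-1}$ and $\mathrm{Var}(\bm Q)\asymp N^{-1}$ are of the same order, ${\bf V}_T^{-1/2}\bm Q$ is not $o_P(1)$, and since ${\bf V}_T={\bf V}_M+{\bf V}_Q$ the quantity ${\bf V}_T^{-1/2}\bm M$ alone can only converge to a normal with covariance ${\bf V}_T^{-1/2}{\bf V}_M{\bf V}_T^{-1/2}={\bf I}_{p+1}-{\bf V}_T^{-1/2}{\bf V}_Q{\bf V}_T^{-1/2}$, which is strictly below the identity. So your proof establishes the theorem only under the additional assumption $n/N\to 0$; keeping both sources of randomness in the limit is precisely the ``double randomness'' feature the result is designed to deliver (contribution (1) in the introduction).

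The paper avoids this by invoking a different CLT: Lemma \ref{martingaleCLT} (from \citet{zhang2021optimal}), a martingale CLT that admits an $\mathcal{F}_{N,0}$-measurable random shift $\bm Z_k={\bf V}_T^{-1/2}\bm Q$ added to the martingale sum, with condition (iii) requiring that the law of $\bm Z_k$ convolved with $\mathcal{N}({\bf 0},{\bf B}_N)$, where ${\bf B}_N={\bf V}_T^{-1/2}{\bf V}_M{\bf V}_T^{-1/2}$, converge to the target $\mathcal{N}({\bf 0},{\bf I}_{p+1})$. Verifying this convolution condition is the real work (Lemmas \ref{Ling-cond3} and \ref{Asynormalsubestimate}): one applies the i.i.d.\ CLT to $\bm Q$, and to replace ${\bf V}_Q^{-1/2}$ by ${\bf V}_T^{-1/2}$ in its characteristic function one needs $\sup_N\sigma_{\max}\left({\bf V}_Q^{1/2}{\bf V}_T^{-1/2}\right)\leq 1$, which the paper obtains from ${\bf V}_T-{\bf V}_Q={\bf V}_M>0$ together with the L\"{o}wner--Heinz theorem (operator monotonicity of the square root). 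None of this machinery appears in your proposal, and it cannot be replaced by Slutsky unless you impose $n/N\to 0$. If you wish to keep your structure in the general regime, you would need to prove joint convergence of $\left({\bf V}_T^{-1/2}\bm M,\,{\bf V}_T^{-1/2}\bm Q\right)$ and add the two asymptotically independent limits---which is exactly what the convolution condition of Lemma \ref{martingaleCLT} formalizes.
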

		
		Theorem \ref{martingalenormality} typically allows for random $\bm\pi$ since the subsampling probabilities may depend on the response. When $\bm\pi$ is prespecified or does not depend on $Y$, the variance can be further simplified to ${\bf V}_{T} = ({nN^2})^{-1}\sum_{j=1}^N{\pi_j}^{-1}{\P}\left(Y_jf({\X}_j,\bm\beta^{\dagger})\leq 1\right)\widetilde{{\X}}_j\widetilde{{\X}}_j^\top+{ {\bf C}}$. In this case, the subsampling procedure affects all the data points, making it impossible to identify the support vectors without any information about $Y$. Assumptions \ref{approxexpansion} and the moment condition in Assumption \ref{density} are utilized to verify the martingale version of the Lindeberg-Feller conditions. In the proof of Theorem~\ref{martingalenormality}, we observe that the first term in ${\bf V}_T$ is derived from the variance of $\bm M$, while the second term $\bf C$ comes from $\bm Q$ and some terms in the variance of $\bm M$ that are independent of $\bm \pi$. In particular, when $n/N\to 0$, the variability from the full dataset is insignificant. This evokes us to determine optimal subsampling probabilities by minimizing certain criteria based on the first term of ${\bf V}_T$. 
		
		
		\subsection*{3.2~~~Optimal leverage classifier}\label{OSanalysis}
		
		The leverage classifier enables fast computation by using a reduced dataset $\mathcal S_n$. Take the uniform subsampling strategy with $\pi_j^{\text{UNIF}}=N^{-1}$, $j=1,\dots, N$ as an example. Assumption \ref{approxexpansion} is satisfied, and thus the corresponding leverage classifier admits the asymptotic properties described in Theorems~\ref{Bahadur-type representation} and \ref{martingalenormality}. However, the uniform subsampling procedure does not account for any statistical model assumption and may fail to capture the most informative sample points leading to unsatisfactory estimates; see Figure~\ref{toy} for illustration. 
		
		We next explore how to determine the subsampling probabilities $\bm\pi=\{\pi_j\}_{j=1}^N$, by which the leverage classifier attains certain statistical optimality based on the asymptotic properties. A key observation is that in Theorem~\ref{martingalenormality} the asymptotic variance matrix ${\bf V}$ is a function of the subsampling probabilities. It motivates us to derive nonuniform subsampling probabilities by minimizing some criterion associated with ${\bf V}$. To this end, we borrow the concepts from the design of experiments and consider A- and L-optimality criteria \citep{atkinson2007optimum}. Note that we expect the subsampling probabilities to satisfy Assumption~\ref{approxexpansion} although it is not required in the following theorem. We will provide a fix for this issue shortly afterward.
		
		\begin{theorem}\label{Samplingprob}
			When minimizing the traces of 	${\bf V}$ and ${\bf V}_{T}$,  two sets of optimal subsampling probabilities based on A- and L-optimality are
			\begin{align}\label{probAL}
				\begin{split}
					\pi_j^{\mathrm{A}}&=\frac{ { {\I}}\left(Y_jf({{\X}}_j,{\bm\beta}^{\dagger})\leq 1\right)\|{ \bf H}({\bm\beta}^{\dagger})^{-1}\widetilde{{{\X}}}_j\|}{ \sum\limits_{k=1}^N{ {\I}}\left(Y_kf({{\X}}_k,{\bm\beta}^{\dagger})\leq 1\right)\|{ \bf H}({\bm\beta}^{\dagger})^{-1}\widetilde{{{\X}}}_k\|},\\
					\pi_j^{\mathrm{L}}&=\frac{ { {\I}}\left(Y_jf({{\X}}_j,{\bm\beta}^{\dagger})\leq 1\right)\|\widetilde{{{\X}}}_j\|}{ \sum\limits_{k=1}^N{ {\I}}\left(Y_kf({{\X}}_k,{\bm\beta}^{\dagger})\leq 1\right)\|\widetilde{{{\X}}}_k\|},
				\end{split}
			\end{align}
			where $j=1,\dots,N$. Correspondingly, the traces of ${\bf V}$ and ${\bf V}_{T}$ attain their minima.
		\end{theorem}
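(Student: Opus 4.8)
The plan is to reduce both optimality criteria to a single constrained scalar minimization and solve it with the Cauchy--Schwarz inequality (equivalently, Lagrange multipliers). First I would discard the $\bm\pi$-free part: since the matrix ${\bf C}$ in Theorem~\ref{martingalenormality} does not depend on $\bm\pi$, minimizing $\mathrm{tr}({\bf V})$ over $\bm\pi$ is equivalent to minimizing the trace of ${\bf H}({\bm\beta}^{\dagger})^{-1}{\bf V}_T^{(1)}{\bf H}({\bm\beta}^{\dagger})^{-1}$, and minimizing $\mathrm{tr}({\bf V}_T)$ is equivalent to minimizing $\mathrm{tr}({\bf V}_T^{(1)})$, where ${\bf V}_T^{(1)}=(nN^2)^{-1}\sum_{j=1}^N \E_{Y\mid\X}(\pi_j^{-1}\xi_j\widetilde{\X}_j\widetilde{\X}_j^\top)$ collects the first ($\bm\pi$-dependent) term and $\xi_j=\I(Y_jf(\X_j,{\bm\beta}^{\dagger})\le 1)$. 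This explains why the A-criterion (from ${\bf V}$) carries the extra factor ${\bf H}({\bm\beta}^{\dagger})^{-1}$ while the L-criterion (from ${\bf V}_T$) does not.

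Next I would carry out the trace simplification using linearity and the cyclic/transpose invariance of the trace together with $\mathrm{tr}(vv^\top)=\|v\|^2$. For the L-criterion this gives $\mathrm{tr}({\bf V}_T^{(1)})=(nN^2)^{-1}\sum_j \E_{Y\mid\X}(\pi_j^{-1}\xi_j\|\widetilde{\X}_j\|^2)$. For the A-criterion, using that ${\bf H}({\bm\beta}^{\dagger})^{-1}$ is symmetric (it is the inverse of a Hessian), $\mathrm{tr}({\bf H}^{-1}\widetilde{\X}_j\widetilde{\X}_j^\top{\bf H}^{-1})=\widetilde{\X}_j^\top{\bf H}^{-2}\widetilde{\X}_j=\|{\bf H}({\bm\beta}^{\dagger})^{-1}\widetilde{\X}_j\|^2$, so the objective becomes $(nN^2)^{-1}\sum_j \E_{Y\mid\X}(\pi_j^{-1}\xi_j\|{\bf H}({\bm\beta}^{\dagger})^{-1}\widetilde{\X}_j\|^2)$. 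Both objectives are then of the common form $\sum_{j=1}^N w_j/\pi_j$ with nonnegative weights $w_j^{\mathrm L}=\xi_j\|\widetilde{\X}_j\|^2$ and $w_j^{\mathrm A}=\xi_j\|{\bf H}({\bm\beta}^{\dagger})^{-1}\widetilde{\X}_j\|^2$.

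I would then solve $\min_{\bm\pi}\sum_j w_j/\pi_j$ subject to $\sum_j \pi_j=1$ and $\pi_j\ge 0$. By Cauchy--Schwarz, $\big(\sum_j \sqrt{w_j}\big)^2=\big(\sum_j \sqrt{w_j/\pi_j}\sqrt{\pi_j}\big)^2\le \big(\sum_j w_j/\pi_j\big)\big(\sum_j \pi_j\big)=\sum_j w_j/\pi_j$, with equality if and only if $\pi_j\propto\sqrt{w_j}$. Substituting $\sqrt{w_j^{\mathrm L}}=\xi_j\|\widetilde{\X}_j\|$ and $\sqrt{w_j^{\mathrm A}}=\xi_j\|{\bf H}({\bm\beta}^{\dagger})^{-1}\widetilde{\X}_j\|$ (using $\sqrt{\xi_j}=\xi_j$ since $\xi_j\in\{0,1\}$) and normalizing yields exactly $\pi_j^{\mathrm A}$ and $\pi_j^{\mathrm L}$ in \eqref{probAL}; the attained lower bound certifies that the traces reach their minima.

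The main obstacle is the measurability and expectation bookkeeping when $\bm\pi$ is permitted to depend on $Y$: the weight $\xi_j$ and the candidate $\pi_j$ are both functions of $(\X_j,Y_j)$, so the Cauchy--Schwarz step must be applied to the conditional objective (conditioning on the full sample, where each $\xi_j$ is a fixed $0/1$ value and the normalization $\sum_j\pi_j=1$ holds per realization) and then reconciled with the $\E_{Y\mid\X}$ appearing in ${\bf V}_T$. I would also note that $\pi_j=0$ on the event $\{\xi_j=0\}$ is admissible, since those points carry zero weight and never enter the objective; this is precisely the leverage-classifier feature that only potential support vectors receive positive sampling probability, and it is what makes the normalizing denominators in \eqref{probAL} range only over points with $\xi_k=1$.
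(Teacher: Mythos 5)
Your proposal is correct and takes essentially the same route as the paper's proof: after discarding the $\bm\pi$-free term ${\bf C}$, both arguments reduce A- and L-optimality to minimizing $\sum_{j} w_j/\pi_j$ under the constraint $\sum_j \pi_j=1$ and invoke the Cauchy--Schwarz inequality, whose equality condition $\pi_j\propto\sqrt{w_j}$ (with $\sqrt{\xi_j}=\xi_j$) gives exactly \eqref{probAL}. The only differences are presentational: the paper re-derives the form of ${\bf V}_T$ inside this proof via the law of total variance rather than quoting Theorem~\ref{martingalenormality}, and your insistence on applying Cauchy--Schwarz conditionally on the realized sample (so that equality is attained pointwise even when $\bm\pi$ depends on $Y$) is, if anything, a slightly more careful treatment of the random-$\bm\pi$ bookkeeping than the paper's displayed bound.
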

		
		Theorem~\ref{Samplingprob} takes an optimization approach to deriving the subsampling probabilities by minimizing the traces of ${\bf V}$ and ${\bf V}_{T}$ in Theorem~\ref{martingalenormality}, respectively. The indicator functions ${\I}(Y_jf({\X}_j,\bm\beta^{\dagger})\leq 1)$ in \eqref{probAL} are related to the definition of support vectors, implying that the leverage classifier inherits the virtue of SVM. Moreover, this result differs substantially from the literature, e.g., \citet{wang2018optimal}, which focuses on fixed subsampling probabilities by conditioning on the full dataset. The random response variable $Y_j$ enters into the expressions \eqref{probAL} via ${\I}\left(Y_jf({{\X}}_j,{\bm\beta}^{\dagger})\leq 1\right)$. Given the full dataset, our result will degenerate to fix subsampling probabilities.

		Two issues arise when applying the subsampling probabilities \eqref{probAL} in practice. First, several population quantities, including the true parameter $\bm\beta^{\dagger}$, the Hessian matrix ${ \bf H}(\bm\beta^{\dagger})$, and the indicator function $\I\left(Y_jf({{\X}}_j,\bm\beta^{\dagger})\leq 1\right)$, need to be estimated. Second, the appearance of indicator functions in \eqref{probAL} may lead to a breakdown of Assumption~\ref{approxexpansion}. To address them, we propose to conduct {\it a pilot} study and substitute the unknown population quantities with their corresponding pilot estimates; and apply {\it an additional thresholding} to the indicator functions. 
		
		Specifically, for the pilot study, we select a pilot sample $\mathcal{S}_0=\{(\X_{i0}^*, Y_{i0}^*)\}_{i=1}^{n_0}$ with some proper probabilities $\bm\pi_{0}^*=\{\pi_{i0}^*\}_{i=1}^{n_0}$ from $\mathcal{D}_N$, for instance, using a simple uniform subsampling procedure. We can then replace the true value of $\bm\beta^{\dagger}$ with the pilot estimator $\widetilde{\bm\beta}^0$. Moreover, the Hessian matrix can be estimated using a nonparametric method, as suggested by \citet{koo2008bahadur},	\begin{align}\label{estimateH}
			\widetilde{{ \bf H}}(\widetilde{\bm\beta}^0)&=\frac{1}{n_0}\sum\limits_{i=1}^{n_0}\frac{1}{N\pi_{i0}^*}K_h\left(1-Y_{i0}^*f({{\X}}_{i0}^*,\widetilde{\bm\beta}^0)\right)\widetilde{{{\X}}}_{i0}\widetilde{{{\X}}}_{i0}^{\top},
		\end{align}
		where $K_h(t)=K(t/h)/h$ with bandwidth $h\to 0$ and the kernel function $K(\cdot)$ satisfying $K(t)\geq 0$ and $\int_{-\infty}^\infty K(t)\,{\rm{d}}t=1$.
		The indicator $\I(Y_jf({{\X}}_j,\bm\beta^{\dagger})\leq 1)$ can be replaced by ${\I}(Y_jf({{\X}}_j,\widetilde{\bm\beta}^0)\leq 1)$. 
		For the additional thresholding for the indicator functions, we work under the level $\delta_N>0$ such that
		\begin{align}\label{piIH1}
			\begin{split}
				\widehat{\pi}_{j}^{\text{A}}&=\frac{\max\left\{ {\I}\left(Y_jf({{\X}}_j,\widetilde{\bm\beta}^0)\leq 1\right)\|\widetilde{ \bf H}(\widetilde{\bm\beta}^0)^{-1}\widetilde{{{\X}}}_j\|,\delta_N\right\}}{ \sum\limits_{k=1}^N\max\left\{{\I}\left(Y_kf({{\X}}_k,\widetilde{\bm\beta}^0)\leq 1\right)\|\widetilde{ \bf H}(\widetilde{\bm\beta}^0)^{-1}\widetilde{{{\X}}}_k\|,\delta_N\right\}},\\
				\widehat{\pi}_{j}^{\text{L}}&=\frac{ \max\left\{{\I}\left(Y_jf({{\X}}_j,\widetilde{\bm\beta}^0)\leq 1\right)\|\widetilde{{{\X}}}_j\|,\delta_N\right\}}{ \sum\limits_{k=1}^N\max\left\{{\I}\left(Y_kf({{\X}}_k,\widetilde{\bm\beta}^0)\leq 1\right)\|\widetilde{{{\X}}}_k\|,\delta_N\right\}},
			\end{split}
		\end{align}
		where $\widetilde{\bm\beta}^0$ is the pilot estimate of $\bm\beta^{\dagger}$, and $\delta_N$ is a user-specified constant. If we choose $\delta_N\propto N^{-1}$, the estimated subsampling probabilities \eqref{piIH1} strike a balance between \eqref{probAL} and uniform subsampling probabilities. A simple calculation can verify that the estimated subsampling probabilities \eqref{piIH1} meet Assumption \ref{approxexpansion}, and the asymptotic results follow with $\bm\pi^*$ replaced by $\widehat{\bm\pi}^{\text{A}}$ and $\widehat{\bm\pi}^{\text{L}}$. The two-step optimal leverage classifier is summarized in Algorithm~\ref{OptimalAlgorithm}. 	
		\begin{algorithm}[htbp]
			\caption{Optimal leverage classifier.}
			\begin{algorithmic}\label{OptimalAlgorithm}
				\STATE \textbf{Step 1} Select $n_0$ pilot training samples $\mathcal{S}_0=\{(\X_{i0}^*,Y_{i0}^*)\}_{i=1}^{n_0}$ with subsampling probabilities $\bm\pi_0^*$ from $\mathcal D_N$. Obtain the pilot estimates $\widetilde{\bm\beta}^{0}$ and $\widetilde{\bf H}(\widetilde{\bm\beta}^{0})$;
				
				\STATE \textbf{Step 2} Calculate the optimal subsampling probabilities $\widehat{\bm\pi}^{\text{A}}$ and $\widehat{\bm\pi}^{\text{L}}$ as in \eqref{piIH1};
				
				\STATE \textbf{Step 3} Sample $n$ training samples as $\mathcal{S}_n=\{({\X}_i^*,Y_i^*)\}_{i=1}^n$ with $\widehat{\bm\pi}^{\text{A}}$ and $\widehat{\bm\pi}^{\text{L}}$ from $\mathcal{D}_N$;
				
				\STATE \textbf{Step 4} Implement Algorithm~\ref{GeneralAlgorithm} with $\mathcal{S}_0\cup\mathcal{S}_n$ and a proper tuning parameter $\lambda$ to obtain $\widetilde{{\bm\beta}}$ and the separating hyperplane $f({\X},\widetilde{\bm\beta})=\widetilde{{\X}}^\top\widetilde{\bm\beta}$.
			\end{algorithmic}
		\end{algorithm}

		The choice of the pilot sample size $n_0$ involves a trade-off between estimation efficiency and computational complexity. A larger $n_0$ makes a more precise pilot estimate of $\bm\beta^{\dagger}$ and the Hessian matrix estimation which is estimated by the nonparametric method. However, the computational complexity of the pilot study  should be negligible compared to those in Steps 3 and 4. Hence, we prefer a relatively small $n_0$; Please refer to the Supplementary Martial for a practical recommendation for $n_0$ with empirical evidence. Moreover, it is worth noting that the combination of  $\mathcal{S}_0$ and $\mathcal{S}_n$ in Step~4 maximizes the utilization of selected samples for hyperplane estimation. To obtain the final subsampling estimate in Step 4, we tune $\lambda$ using the weighted version of GACV, which minimizes ${n}^{-1}\sum\nolimits_{k=1}^n{(N\bm\pi_k^*)}^{-1}{\left[1-Y_k^*f_{\lambda}^{[-k]}({{\X}}_k^*,{{\bm\beta}})\right]_{+}}$.

		The overall computational complexity of the optimal leverage classifier comprises three components. First, the cost of the pilot estimates is ${O}(n_0^3)$. Second, calculating the subsampling probabilities $\widehat{\bm\pi}^{\text{A}}$ and $\widehat{\bm\pi}^{\text{L}}$ requires ${O}(N(p+1)^2)$ and ${O}(N(p+1))$, respectively. Third, constructing the the separating hyperplane $\widetilde{{\bm\beta}}$ with $\mathcal{S}_0\cup\mathcal{S}_n$ takes ${O}\left((n+n_0)^3\right)$. In sum, the computational complexities of optimal leverage classifiers with $\widehat{\bm\pi}^{\text{A}}$ and $\widehat{\bm\pi}^{\text{L}}$ are ${O}\left(n_0^3 + N(p+1)^2 + (n+n_0)^3\right)$ and ${O}\left(n_0^3 + N(p+1) + (n+n_0)^3\right)$, respectively. For extremely large $N$, the computational complexity is reduced to ${O}\left(N(p+1)^2\right)$ and ${O}\left(N(p+1)\right)$, which is linear in $N$. Compared with $O(N^3)$ for the full sample SVM, the optimal leverage classifier achieves fast computation with provable optimality.
		
		We conclude with a discussion on the Fisher consistency of the leverage classifier. Fisher consistency is a desirable property of the loss function used by classifiers, that is, the population minimizer of the loss function leads to the Bayes optimal rule of classification \citep{lin2004note}. \citet{lin2002statistical} has shown that the hinge loss function used by the SVM satisfies Fisher consistency for classification. Under the framework of the leverage classifier as in Algorithms~\ref{GeneralAlgorithm}, it is clear that $\E\left([1-Y^*f(\X^*,\bm\beta)]_{+}\right)=\E\left\{\E\left([1-Y_i^*f(\X_i^*,\bm\beta)]_{+}|\mathcal{D}_N\right)\right\}=\E\left([1-Yf(\X,\bm\beta)]_{+}\right)$, which implies that the leverage classifier inherits the Fisher consistency from SVM.

		\section{Simulation Studies}\label{Simulation}

		In this section, we conduct extensive simulated experiments  to demonstrate the numerical performance of our optimal leverage classifiers from the perspectives of estimation, prediction, and computation. 
		
		\subsection*{4.1~~~Settings}
		We generate a set of data points with covariate dimension $p=8$ and randomly split them into two halves as training and testing datasets. The training dataset $\mathcal D_N$ is of size $N=10^5$. 
		The testing dataset is used to evaluate the prediction accuracy. We uniformly sample $n_0=500$ pilot samples for the pilot study.
		All simulation results are based on 500 replications.
		Table \ref{Tab:BandwidthSelector} in our Supplementary Material discusses the selection of bandwidth for Hessian matrix estimation and shows that the effect of different bandwidths can be ignorable. Therefore, we employ Silverman's rule of thumb \citep{silverman1986density} to determine the appropriate bandwidth. We set the thresholding constant in \eqref{piIH1} as $\delta_N=0.01N^{-1}$. For a scalar $c$, write ${\bm c}_p = (c,\dots,c)$ be the $p$-dimensional row vector of $c$'s. Four scenarios are considered:
		\begin{itemize}
			\item [(I)] im-Uniform. The covariate ${\X}$ is independent and identically distributed from the uniform distribution.
			The $l$-coordinate of $\X$ is
			$U[0,1]$ given $Y=1$ and is $U[0.3,1.3]$ given $Y=-1$, $l=1,\dots,p$. The proportions of data points for two classes are 80\% and 20\%. This is an imbalanced case.
			
			\item [(II)] normMIX.
			The covariate ${\X}$ follows a mixture of three multivariate normal distributions with the same covariance matrix but different means. Let ${\X} \sim 0.5\mathcal{N}\left({\bm\mu}_{11},{\bm\Sigma}\right)+0.25\mathcal{N}\left({\bm\mu}_{12},{\bm\Sigma}\right)+0.25\mathcal{N}\left({\bm\mu}_{13},{\bm\Sigma}\right)$ given $Y=1$, ${\X} \sim 0.5\mathcal{N}\left({\bm\mu}_{-11},{\bm\Sigma}\right)+0.25\mathcal{N}\left({\bm\mu}_{-12},{\bm\Sigma}\right)+0.25\mathcal{N}\left({\bm\mu}_{-13},{\bm\Sigma}\right)$ given $Y=-1$,
			where ${\bm\mu}_{11}= ({\bf 0}_{p/2}, {\bf 3}_{p/2})^\top$, ${\bm\mu}_{12}=(-{\bf 3}_{p/2}, {\bf 5}_{p/2})^\top$, ${\bm\mu}_{13}=-{\bf 3}_p^\top$, ${\bm\mu}_{-11}=({\bf 0}_{p/2},-{\bf 3}_{p/2})^{\top}$, ${\bm\mu}_{-12}=({\bf 3}_{p/2},-{\bf 5}_{p/2})^{\top}$, and ${\bm\mu}_{-13}=({\bf 3}_{p/2},{\bf 5}_{p/2})^{\top}$. The proportions of two classes are equal to 50\%.

			\item [(III)]  \text{T3}. The covariate ${\X}$ follows a multivariate $t(3)$ distribution with different means. Let ${\X}  \sim t_3\left({\bm\mu}_1,{{\bf I}}_p\right)/10$ given $Y=1$ and ${\X} \sim t_3\left({\bm\mu}_{-1},{{\bf I}}_p\right)/10$ given $Y=-1$ , where ${\bm\mu}_1 = {\bf 0.75}_p$, ${\bm\mu}_{-1}=-{\bf 0.75}_p$. The proportions of two classes are equal to 50\%.
			
			\item [(IV)]  T3MIX. The covariate ${\X}$ follows a mixture of two multivariate $t(3)$ distributions with different means. Let  ${\X} \sim 0.3t_3\left({\bm\mu}_{11},{{\bf I}}_p\right)+0.7t_3\left({\bm\mu}_{12},{{{\bf I}}}_p\right)$ given $Y=1$ and  ${\X} \sim 0.4t_3\left({\bm\mu}_{-11},{{\bf I}}_p\right)+0.6t_3\left({\bm\mu}_{-12},{{{\bf I}}_p}\right)$ given $Y=-1$, where ${\bm\mu}_{11}= {\bf 2}_p^{\top}$, ${\bm\mu}_{12}=-{\bf 3}_p^{\top}$, ${\bm\mu}_{-11}=-{\bf 1}_p^{\top}$, ${\bm\mu}_{-12}={\bf 8}_p^{\top}$. The proportions of two classes are equal to 50\%.
		\end{itemize}

		\begin{figure}[htbp]
			\centering
			\setlength{\abovecaptionskip}{0cm}
			\setlength{\belowcaptionskip}{-0.cm}
			\includegraphics[width=1\textwidth]{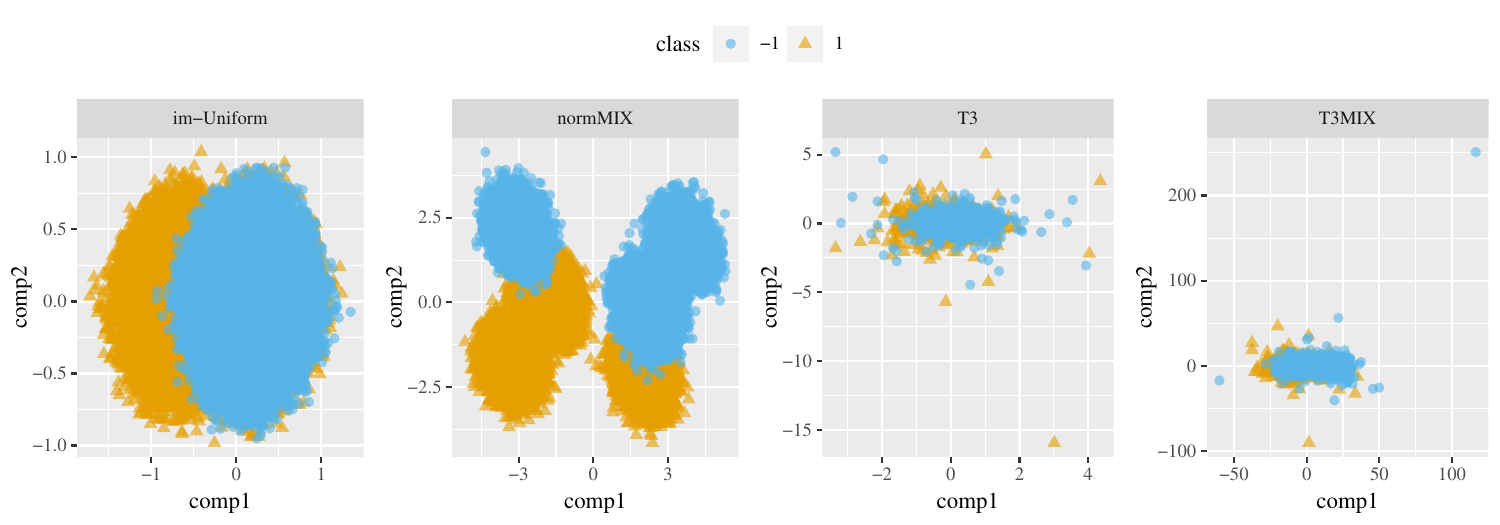}\par
			\vspace{-0.2cm}
			\caption{Full dataset visualization with principal component analysis under Scenarios I--IV.}\label{visualization4}
		\end{figure}
		
		We first project the full datasets of Scenarios I--IV into their first two principal components in Figure \ref{visualization4} to make an intuitive visualization. Besides the optimal leverage classifiers, we also consider Algorithm \ref{GeneralAlgorithm} with $n+n_0$ subsamples uniformly sampled from the training set and the full sample SVM, termed as LC-UNIF and SVM-FULL, respectively.

		\subsection*{4.2~~~Results}
		To assess the estimation performance in approximating the full sample SVM, we calculate the mean squared error of $\widetilde{\bm\beta}$ on training set from $B=500$ replications as $\text{MSE}(\widetilde{\bm\beta}) =B^{-1}\sum_{b=1}^B\|\widetilde{\bm\beta}^{(b)}-\widehat{\bm\beta}\|^2$,
		where $\widetilde{\bm\beta}^{(b)}$ is the estimator obtained  from the $b$-th replication, and $\widehat{\bm\beta}$ is the estimator of the full sample SVM.

		Figure~\ref{plane-estimation-MSE} investigates the effect of subsample size on the estimation performance. Across all simulation scenarios, the optimal leverage classifiers outperform those with uniform subsampling, which aligns with our theoretical analysis in Theorem \ref{Samplingprob}.  The leverage classifier with A-optimal subsampling probabilities performs slightly better than that with L-optimality since A-optimality captures more sample information via the Hessian matrix. Moreover, the proposed methods outperform the leverage classifier with uniform subsampling under Scenario III (T3) and Scenario IV (T3MIX), where the heavy-tail distribution violates the moment assumption in Theorem~\ref{martingalenormality}. As our method is designed to identify points close to the classification hyperplane, it is expected to be robust to outliers. Under the imbalanced case in Scenario I, the optimal leverage classifiers also perform well. Additional simulations in Supplementary Material demonstrate that our method is not sensitive to the pilot sample size $n_0$. Then, we practically recommend the ratio $n_0/(n+n_0)$ to be around $(0.2,0.4)$.
		
		\begin{figure}[htbp]
			\centering
			\setlength{\abovecaptionskip}{0cm}
			\setlength{\belowcaptionskip}{-0.cm}
			\includegraphics[width=1\textwidth]{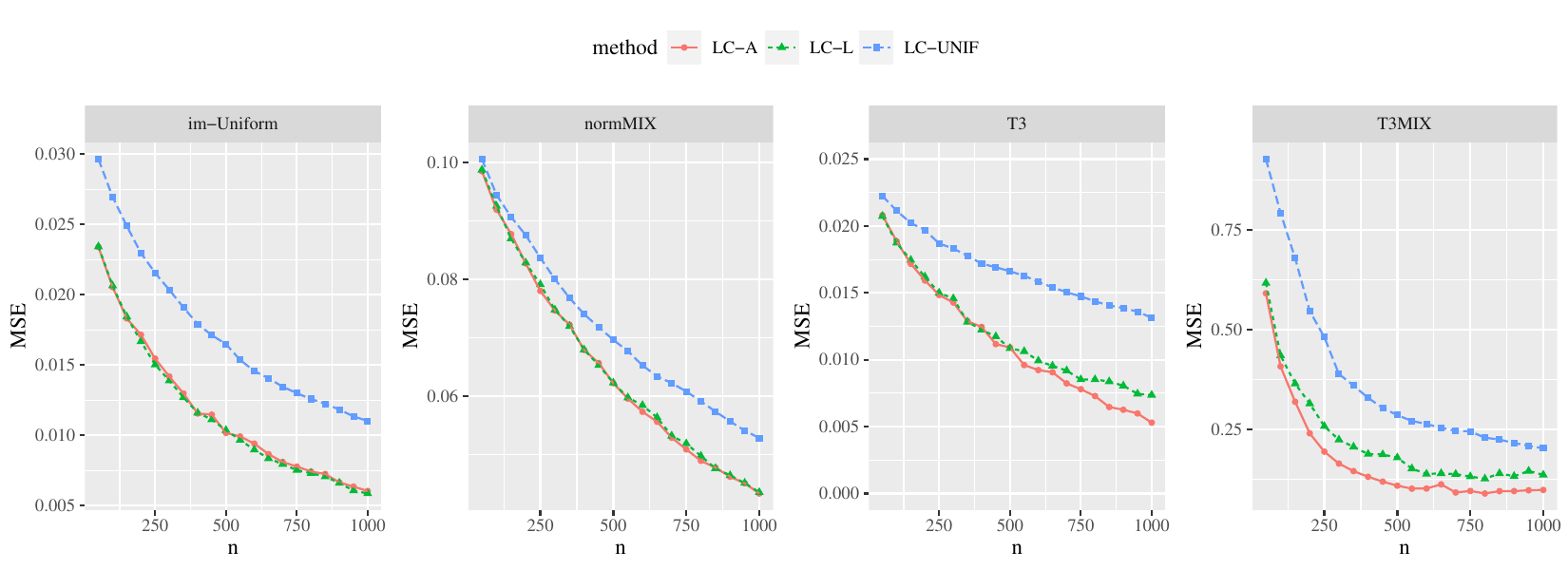}\par
			\vspace{-0.2cm}
			\caption{Comparison of MSE  for approximating the full sample SVM estimator $\widehat{\bm\beta}$ against different subsample sizes under Scenarios I--IV.}\label{plane-estimation-MSE}
		\end{figure}

		\begin{figure}[htbp]
			\centering
			\setlength{\abovecaptionskip}{0cm}
			\setlength{\belowcaptionskip}{-0.cm}
			\includegraphics[width=1\textwidth]{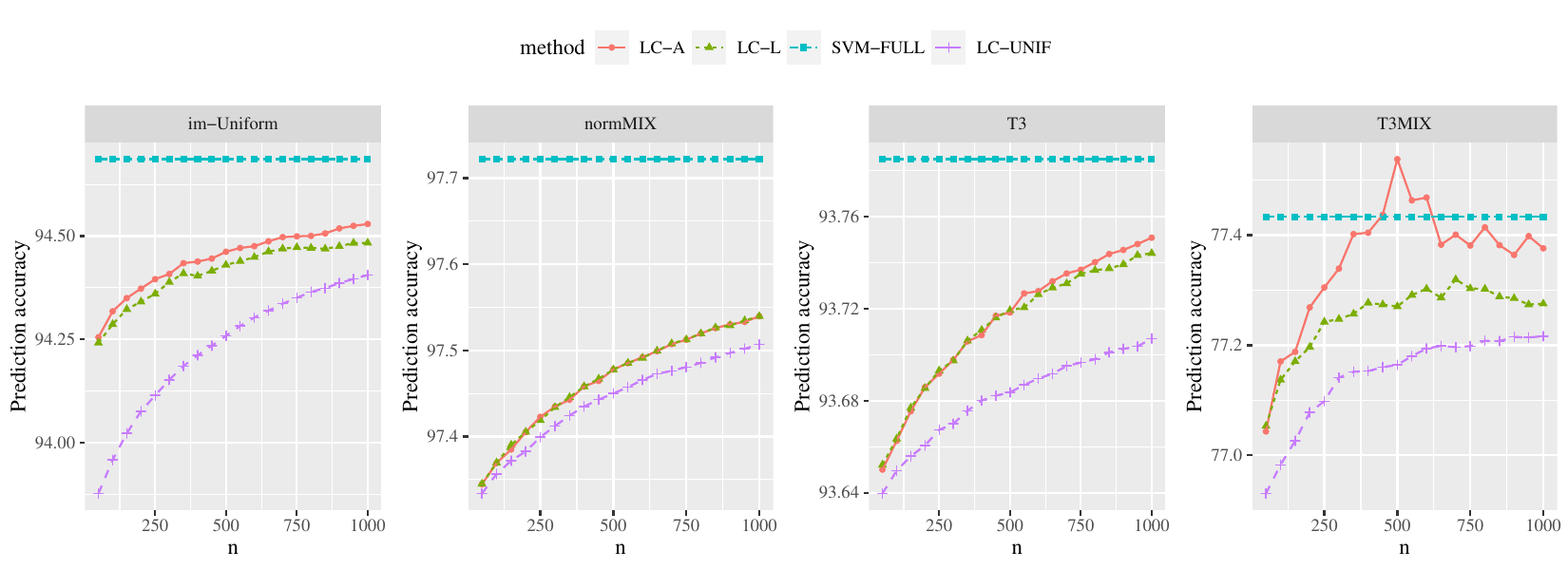}\par
			\vspace{-0.2cm}
			\caption{Comparison of prediction accuracy (\%) against different subsample sizes under Scenarios I--IV.}\label{subsample-accuracy}
		\end{figure}
		
		Figure \ref{subsample-accuracy} indicates that all methods approach the performance of the full sample SVM as $n$ increases.  Remarkably, our optimal leverage classifier sometimes outperforms the full sample SVM in terms of prediction accuracy, as observed in Scenario IV. When $n$ is relatively small, our optimal leverage classifiers exhibit higher prediction accuracy than uniform subsampling, even in scenarios with heavy-tail covariate distribution and imbalanced classes.  In addition, as pointed out by a reviewer, constructing classifiers using the support vectors from the pilot sample degenerates to the special case with $n=0$ of LC-UNIF, which is typically challenging to outperform our optimal classifiers due to the larger subsample size $n$ and optimal subsampling probability $\bm\pi$ utilized in our approach.
		
		\begin{figure}[htbp]
			\centering
			\setlength{\abovecaptionskip}{0cm}
			\setlength{\belowcaptionskip}{-0.cm}
			\includegraphics[width=1\textwidth]{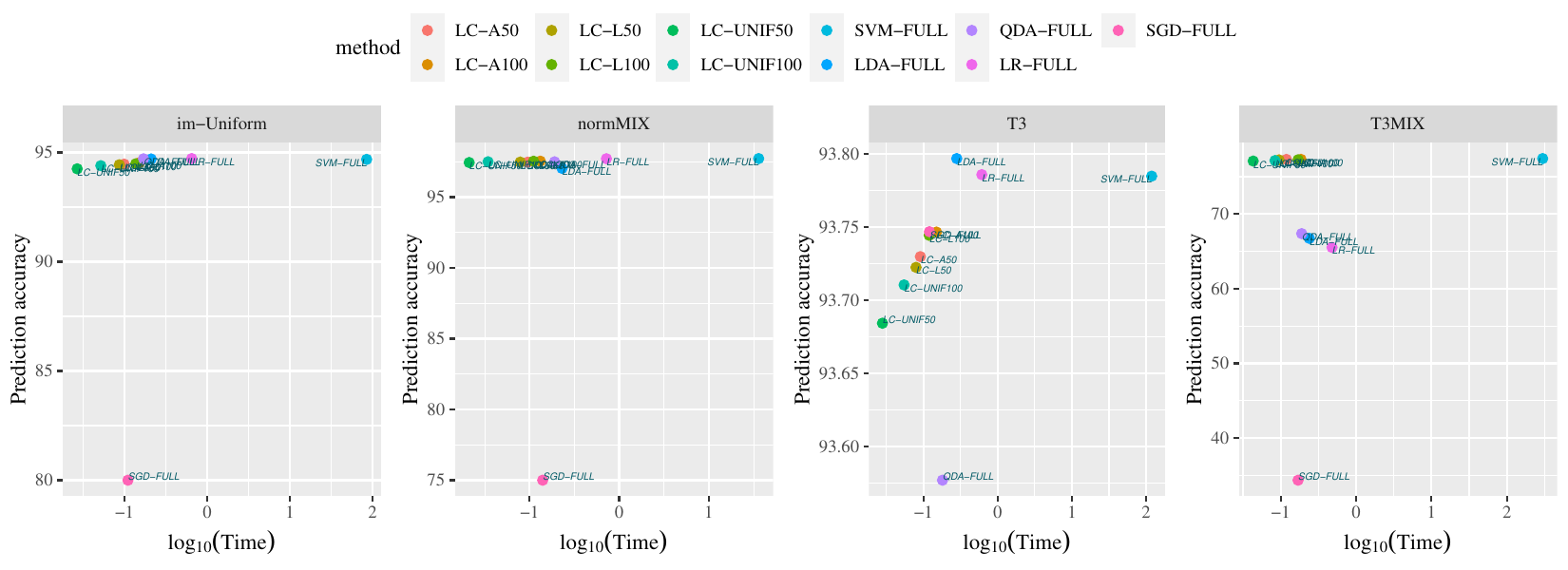}\par
			\vspace{-0.2cm}
			\caption{Comparison of prediction accuracy (\%) and training time for several classifiers against different subsample sizes under Scenarios I--IV. The logarithm is taken on time for a better presentation of the figures.}\label{classifier4}
		\end{figure}
		
		Next, we compare the leverage classifiers with several benchmark classifiers, including logistic regression (LR), linear discriminant analysis (LDA), quadratic discriminant analysis (QDA), and fast stochastic gradient descent (SGD), 
		in terms of training time and prediction accuracy. All four competitors are trained based on the full dataset. 
		Figure \ref{classifier4} elaborates that the optimal leverage classifiers achieve higher prediction accuracy with similar computing time under most scenarios. Compared to the full data approach, the proposed method yields significant computational time savings without sacrificing much accuracy. This aligns with our theoretical results that the convergence rate is only $O(\sqrt{N})$ while the computational cost is $O(N^3)$. In particular, leverage classifiers are more robust than logistic regression since the SVM only depends on the support vectors, while logistic regression is related to the likelihood of the full dataset. Linear discriminant analysis and quadratic discriminant analysis may work well because they are model-based classifiers requiring Gaussian distribution assumption.  Stochastic gradient descent algorithm can significantly reduce computational resources for large-scale datasets or online datastreams, but each iteration is updated by random sampling, which may lead to the loss of informative data points, and affect accuracy, particularly in imbalanced and mixed settings.  In Scenario IV, the prediction accuracy of our classifiers is about 10\% higher than others. Overall, it is promising that the leverage classifiers using a reduced dataset can outperform some classifiers using the full sample.
		
		\begin{figure}[htbp]
			\centering
			\setlength{\abovecaptionskip}{0cm}
			\setlength{\belowcaptionskip}{-0.cm}
			\includegraphics[width=1\textwidth]{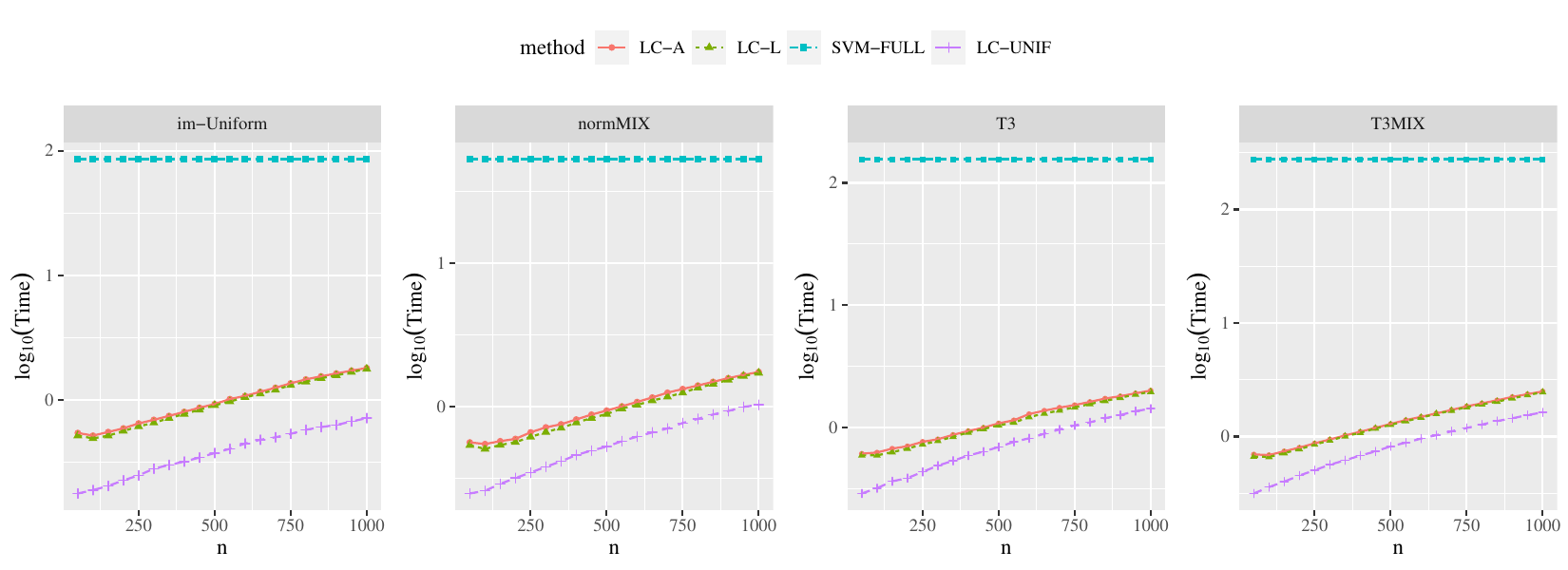}\par
			\vspace{-0.2cm}
			\caption{Comparison of CPU time (in seconds) against different subsample sizes
				under Scenarios I--IV. The logarithm is taken on time for a better presentation of the figures. 
			}\label{estimation-subsample-time}
		\end{figure}
		
		To validate the computational benefit of the leverage classifiers for large datasets, we further record the average computing time 
		for each method during 500 replications. We use the fast R package \texttt{LiblineaR} to fit the full sample SVM. Figure \ref{estimation-subsample-time} illustrates that the computing time of the full sample SVM is significantly larger than all leverage classifiers, as expected. our optimal leverage classifiers require slightly more time than uniform subsampling, this is due to the additional pilot study required to determine subsampling probabilities.  Moreover, due to additional calculations with the Hessian matrix in A-optimality, the L-optimal subsampling probabilities take less computing time than A-optimality, which is consistent with our computational complexity analysis in Section 3.2. Figure \ref{plane-estimation-MSE} and Figure \ref{estimation-subsample-time} both show that increasing $n$ leads to smaller MSE but also requires more computing time. The trade-off between estimation efficiency and computational efficiency actually affect by the practitioners' resource constraints and efficiency requirements, such as measurement cost, processing time, memory capacity, and prediction accuracy. We also report the computing time via one replication for different full sample sizes under Scenario I in Table~\ref{estimation-full-time}. The computational advantage of leverage classifiers becomes significant as $N$ increases.

		\begin{table}[htbp]
			\tabcolsep 6pt
			\caption{Comparison of CPU time (in seconds) under Scenario I when $n=1000$.}
			\label{estimation-full-time}
			\begin{center}
				\centering
				{
					\scalebox{1}{
						\begin{tabular}{lrrrrrrrrrrrrrrrrrrrrrrr}
							\hline
							\text{Method}&\multicolumn{13}{c}{$N$}\\
							\cline{2-14}
							&$ 10^3$&&$10^4$&&$10^5$&&$10^6$&&$10^7$\\
							\hline
							\textbf{LC-A}     &1.32&&1.33&&1.75&&1.85&&3.82\\
							\textbf{LC-L}     &1.32&&1.29&&1.48&&1.56&&2.62\\
							\textbf{LC-UNIF}  &0.29&&0.50&&0.53&&0.64&&0.69\\
							\textbf{SVM-FULL} &0.08&&0.65&&9.43&&240.48&&2526.90\\
							\hline
				\end{tabular}}}
			\end{center}
		\end{table}

		\section{Real Data Analysis}\label{CASP}
		
		Protein structure prediction is a critical challenge in computational biology \citep{lesk2019introduction}, and SVM has been a popular method for this task.  However, the high computational cost associated with SVM has limited its widespread applications in this field. 
		To this end, we examine the performance of our leverage classifier in protein structure prediction using the  ``Physicochemical Properties of Protein Tertiary Structure Dataset''. This dataset is taken from the critical assessment of protein structure prediction (CASP) experiments and includes 45,730 decoys with nine covariates.
		More details are available at the UCI machine learning repository \citep{Dua:2019}.
		
		\begin{figure}[htbp]
			\centering
			\setlength{\abovecaptionskip}{0cm}
			\setlength{\belowcaptionskip}{-0.cm}
			\includegraphics[width=1\textwidth]{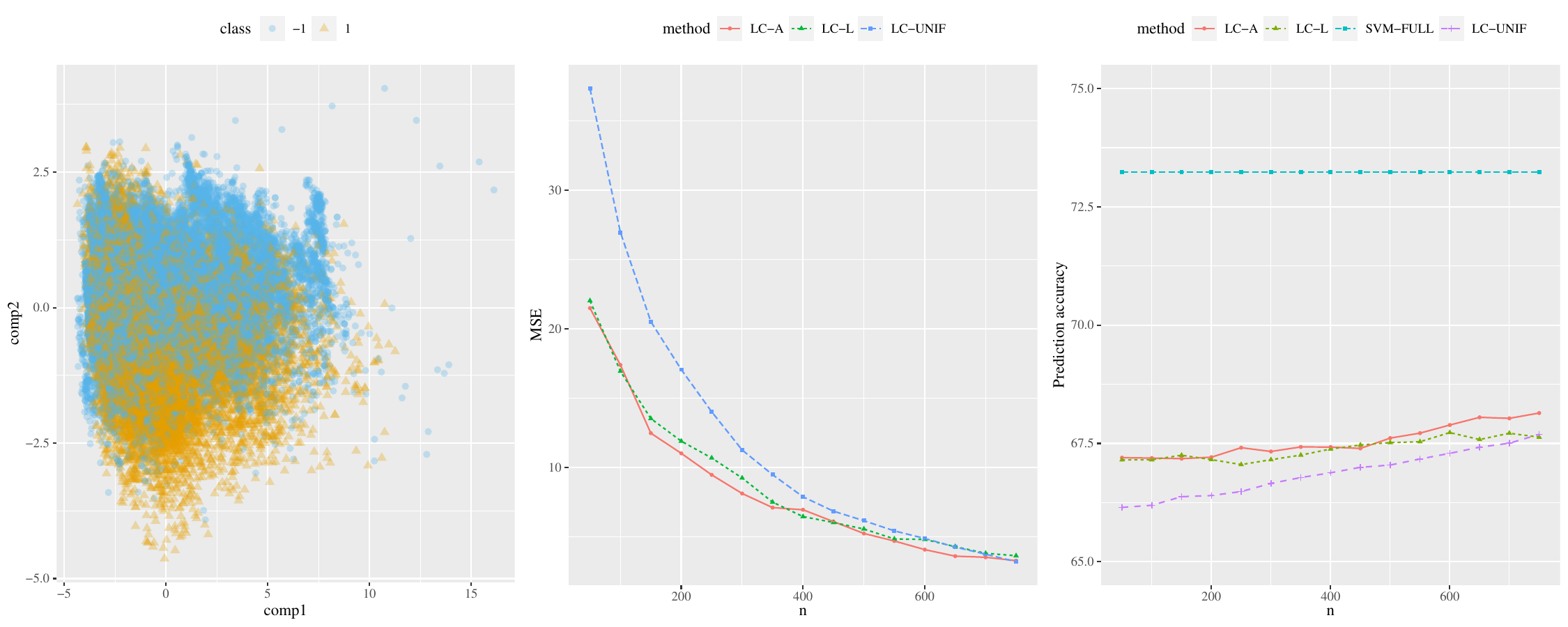}\par
			\vspace{-0.2cm}
			\caption{Analysis results for CASP dataset. Left panel: visualization with principal component analysis. Middle panel: MSE in approximating the full sample SVM estimator $\widehat{\bm\beta}$. Right panel: prediction accuracy (\%). }\label{CASP-plot}
		\end{figure}
		
		Root mean squared deviation (RMSD) is widely used as a metric for measuring the deviation of protein structures from their native protein structures  \citep{iraji2016rmsd}. In this analysis, our goal is to construct a classifier and predict whether the root mean squared deviation is greater than ten or not. This setting leads to the proportions of two classes about 40\% and 60\%.  Before applying our methods,
		we standardize each input variable with mean zero and standard deviation one, and then visualize it shown in the left panel of Figure \ref{CASP-plot}. We randomly select half of the dataset as the training set and leave the rest as the testing set for prediction. Uniformly choose $n_0=500$ pilot subsamples from the training set to obtain the subsampling probabilities $\widehat{{\bm\pi}}^{\text{A}}$ and $\widehat{{\bm\pi}}^{\text{L}}$.

		\begin{table}[htbp]      \tabcolsep 6pt
			\caption{Comparison of CPU time (in seconds) for CASP dataset.}
			\label{CASP-time}
			\begin{center}
				\centering
				{
					\scalebox{1}{
						\begin{tabular}{llrrrrrrrrrrrrrrrr}
							\hline
							\text{Method}&\multicolumn{9}{c}{$n$}\\
							\cline{2-10}
							& 50 & 100 & 200 &300 &400 &500&600&700&800 \\
							\hline
							\textbf{LC-A}&0.70&0.76&0.87&1.01&1.18&1.33&1.52&1.75&2.03\\
							\textbf{LC-L}&0.69&0.75&0.85&1.00&1.16&1.33&1.51&1.76&2.03\\
							\textbf{LC-UNIF}&0.39&0.42&0.53&0.66&0.81&0.96&1.11&1.20&1.52\\
							\textbf{SVM-FULL}&\multicolumn{8}{c}{11.93}\\
							\hline
				\end{tabular}}}
			\end{center}
		\end{table}
		
		In Table \ref{CASP-time}, our optimal leverage classifiers are significantly faster than the full sample SVM which is implemented by the fast R package \texttt{LiblineaR}. This phenomenon agrees with numerical studies, and it is a great improvement of the method to approximate the full sample SVM. 
		The middle and right panels of Figure \ref{CASP-plot} present the mean squared errors of approximating the full sample SVM estimator and the prediction performances. The good performance of the optimal leverage classifiers is consistent with our theory and the numerical studies.
		
		\section{Conclusion}\label{conclusion}

		Constructing accurate classifiers with informative subsamples from large-scale datasets is a crucial task in statistical analysis and machine learning. In this paper, we propose a novel leverage classifier for SVM under the subsampling framework to address the computational challenge. We construct optimal leverage classifiers by minimizing the unconditional asymptotic variance with double randomnesses.  Our extensive numerical investigations demonstrate that the proposed methods provide satisfactory performances in estimation, computation, and prediction.
		
		Subsampling is a fast and effective strategy for processing large-scale datasets and further research is needed for more delicate statistical models. We conclude this paper with several future topics.  First, our binary subsampling leverage classifier may be extended to multi-classification problems by one-versus-one or one-versus-rest SVM in a linear nonseparable setting. Second, one limitation in our work is that we only focus on the linear SVM for nonseparable cases to shed light on the leverage classifiers. Extensions of the leverage classifiers to more general settings, such as kernel SVM in reproducing kernel Hilbert spaces, remain challenging because it is unclear how to integrate existing asymptotic results \citep{hable2012asymptotic} with our subsampling framework. Third, it is worth further exploring the trade-off between estimation efficiency and computation complexity under measurement constraints. Finally, investigating other optimal criteria, such as minimizing the classification error or maximizing the prediction accuracy, also merits further research.

		
%
		
		\vspace*{-10pt}
		\par

		\section*{Supplementary Material~~}
		\label{SM}
		A supplementary PDF file contains the proof of theoretical results and additional simulation results in our paper.

		
%
		
		\bibliographystyle{apalike}
		\bibliography{SVM_Sampling_references}

\begin{thebibliography}{}

\bibitem[Ai et~al., 2018]{ai2018optimal}
Ai, M., Yu, J., Zhang, H., and Wang, H. (2018).
\newblock Optimal subsampling algorithms for big data regressions.
\newblock {\em Statistica Sinica}, \textbf{6}(4):363--392.

\bibitem[Atkinson et~al., 2007]{atkinson2007optimum}
Atkinson, A., Donev, A., and Tobias, R. (2007).
\newblock {\em Optimum experimental designs, with SAS}.
\newblock Oxford University Press.

\bibitem[Bordes et~al., 2005]{bordes2005fast}
Bordes, A., Ertekin, S., Weston, J., and Bottou, L. (2005).
\newblock Fast kernel classifiers with online and active learning.
\newblock {\em Journal of Machine Learning Research}, \textbf{6}(9):1579--1619.

\bibitem[Boser et~al., 1992]{boser1992training}
Boser, B.~E., Guyon, I.~M., and Vapnik, V.~N. (1992).
\newblock A training algorithm for optimal margin classifiers.
\newblock In {\em Proceedings of the Fifth Annual Workshop on Computational
  Learning Theory}, pages 144--152.

\bibitem[Camelo et~al., 2015]{camelo2015nearest}
Camelo, S.~A., Gonz{\'a}lez-Lima, M.~D., and Quiroz, A. (2015).
\newblock Nearest neighbors methods for support vector machines.
\newblock {\em Annals of Operations Research}, \textbf{235}(1):85--101.

\bibitem[Chang, 2011]{chang2011psvm}
Chang, E.~Y. (2011).
\newblock {PSVM}: Parallelizing support vector machines on distributed
  computers.
\newblock In {\em Foundations of Large-Scale Multimedia Information Management
  and Retrieval}, pages 213--230. Springer.

\bibitem[Cortes and Vapnik, 1995]{cortes1995support}
Cortes, C. and Vapnik, V. (1995).
\newblock Support-vector networks.
\newblock {\em Machine Learning}, \textbf{20}(3):273--297.

\bibitem[Drineas et~al., 2012]{drineas2012fast}
Drineas, P., Magdon-Ismail, M., Mahoney, M.~W., and Woodruff, D.~P. (2012).
\newblock Fast approximation of matrix coherence and statistical leverage.
\newblock {\em Journal of Machine Learning Research},
  \textbf{13}(1):3475--3506.

\bibitem[Drineas et~al., 2011]{drineas2011faster}
Drineas, P., Mahoney, M.~W., Muthukrishnan, S., and Sarl{\'o}s, T. (2011).
\newblock Faster least squares approximation.
\newblock {\em Numerische Mathematik}, \textbf{117}(2):219--249.

\bibitem[Dua and Graff, 2017]{Dua:2019}
Dua, D. and Graff, C. (2017).
\newblock {UCI} machine learning repository.

\bibitem[Durrett, 2019]{durrett2019probability}
Durrett, R. (2019).
\newblock {\em Probability: theory and examples}, volume~49.
\newblock Cambridge university press.

\bibitem[Fan et~al., 2020]{fan2020statistical}
Fan, J., Li, R., Zhang, C.-H., and Zou, H. (2020).
\newblock {\em Statistical foundations of data science}.
\newblock Chapman and Hall/CRC.

\bibitem[Hable, 2012]{hable2012asymptotic}
Hable, R. (2012).
\newblock Asymptotic normality of support vector machine variants and other
  regularized kernel methods.
\newblock {\em Journal of Multivariate Analysis}, {\bf 106}:92--117.

\bibitem[Han et~al., 2023]{Han2023model}
Han, Y., Ma, P., Ren, H., and Wang, Z. (2023).
\newblock Model checking in large-scale dataset via
  structure-adaptive-sampling.
\newblock {\em Statistica Sinica}, {\textbf{33}}:303--329.

\bibitem[Hastie et~al., 2010]{hastie2010elements}
Hastie, T., Tibshirani, R., and Friedman, J. (2010).
\newblock {\em The elements of statistical learning: data mining, inference,
  and prediction}.
\newblock Springer Science \& Business Media.

\bibitem[Hsieh et~al., 2008]{hsieh2008dual}
Hsieh, C.-J., Chang, K.-W., Lin, C.-J., Keerthi, S.~S., and Sundararajan, S.
  (2008).
\newblock A dual coordinate descent method for large-scale linear {SVM}.
\newblock In {\em Proceedings of the 25th International Conference on Machine
  Learning}, pages 408--415.

\bibitem[Iraji and Ameri, 2016]{iraji2016rmsd}
Iraji, M.~S. and Ameri, H. (2016).
\newblock {RMSD} protein tertiary structure prediction with soft computing.
\newblock {\em IJ Mathematical Sciences and Computing}, {\textbf 2}:24--33.

\bibitem[Kaufman, 1998]{kaufman1998solving}
Kaufman, L. (1998).
\newblock Solving the quadratic programming problem arising in support vector
  classification.
\newblock {\em Advances in Kernel Methods-Support Vector Learning}, pages
  147--167.

\bibitem[Kimeldorf and Wahba, 1971]{kimeldorf1971some}
Kimeldorf, G. and Wahba, G. (1971).
\newblock Some results on {T}chebycheffian spline functions.
\newblock {\em Journal of Mathematical Analysis and Applications},
  \textbf{33}(1):82--95.

\bibitem[Koo et~al., 2008]{koo2008bahadur}
Koo, J.-Y., Lee, Y., Kim, Y., and Park, C. (2008).
\newblock A {B}ahadur representation of the linear support vector machine.
\newblock {\em Journal of Machine Learning Research}, \textbf{9}(7):1343--1368.

\bibitem[Lesk, 2019]{lesk2019introduction}
Lesk, A. (2019).
\newblock {\em Introduction to Bioinformatics}.
\newblock Oxford University Press.

\bibitem[Li and Meng, 2020]{li2020modern}
Li, T. and Meng, C. (2020).
\newblock Modern subsampling methods for large-scale least squares regression.
\newblock {\em International Journal of Cyber-Physical Systems (IJCPS)},
  2(2):1--28.

\bibitem[Lin, 2004]{lin2004note}
Lin, Y. (2004).
\newblock A note on margin-based loss functions in classification.
\newblock {\em Statistics \& Probability letters}, \textbf{68}(1):73--82.

\bibitem[Lin et~al., 2002]{lin2002statistical}
Lin, Y., Wahba, G., Zhang, H., and Lee, Y. (2002).
\newblock Statistical properties and adaptive tuning of support vector
  machines.
\newblock {\em Machine Learning}, \textbf{48}(1):115--136.

\bibitem[Ma et~al., 2022]{ma2022asymptotic}
Ma, P., Chen, Y., Zhang, X., Xing, X., Ma, J., and Mahoney, M.~W. (2022).
\newblock Asymptotic analysis of sampling estimators for randomized numerical
  linear algebra algorithms.
\newblock {\em Journal of Machine Learning Research}, {\bf 23}(1):7970--8014.

\bibitem[Ma et~al., 2015a]{ma2015efficient}
Ma, P., Huang, J.~Z., and Zhang, N. (2015a).
\newblock Efficient computation of smoothing splines via adaptive basis
  sampling.
\newblock {\em Biometrika}, \textbf{102}(3):631--645.

\bibitem[Ma et~al., 2015b]{ma2015statistical}
Ma, P., Mahoney, M.~W., and Yu, B. (2015b).
\newblock A statistical perspective on algorithmic leveraging.
\newblock {\em Journal of Machine Learning Research}, \textbf{16}(1):861--911.

\bibitem[Mahoney and Drineas, 2009]{mahoney2009cur}
Mahoney, M.~W. and Drineas, P. (2009).
\newblock {CUR} matrix decompositions for improved data analysis.
\newblock {\em Proceedings of the National Academy of Sciences},
  \textbf{106}(3):697--702.

\bibitem[Mehrotra, 1992]{mehrotra1992implementation}
Mehrotra, S. (1992).
\newblock On the implementation of a primal-dual interior point method.
\newblock {\em SIAM Journal on Optimization}, \textbf{2}(4):575--601.

\bibitem[Meng et~al., 2021]{meng2021lowcon}
Meng, C., Xie, R., Mandal, A., Zhang, X., Zhong, W., and Ma, P. (2021).
\newblock Lowcon: A design-based subsampling approach in a misspecified linear
  model.
\newblock {\em Journal of Computational and Graphical Statistics}, \textbf
  {30}(3):694--708.

\bibitem[Meng et~al., 2020]{meng2020more}
Meng, C., Zhang, X., Zhang, J., Zhong, W., and Ma, P. (2020).
\newblock More efficient approximation of smoothing splines via space-filling
  basis selection.
\newblock {\em Biometrika}, \textbf{107}(3):723--735.

\bibitem[Ohlsson, 1989]{ohlsson1989asymptotic}
Ohlsson, E. (1989).
\newblock Asymptotic normality for two-stage sampling from a finite population.
\newblock {\em Probability Theory and Related Fields}, \textbf{81}(3):341--352.

\bibitem[Platt, 1998]{platt1998fast}
Platt, J. (1998).
\newblock Fast training of support vector machines using sequential minimal
  optimization.
\newblock In {\em Advances in Kernel Methods - Support Vector Learning}. MIT
  Press.

\bibitem[Pollard, 1991]{pollard1991asymptotics}
Pollard, D. (1991).
\newblock Asymptotics for least absolute deviation regression estimators.
\newblock {\em Econometric Theory}, \textbf{7}(2):186--199.

\bibitem[Ren et~al., 2022]{ren2022large}
Ren, H., Zou, C., Chen, N., and Li, R. (2022).
\newblock Large-scale datastreams surveillance via pattern-oriented-sampling.
\newblock {\em Journal of the American Statistical Association}, {\bf
  117}(538):794--808.

\bibitem[Sch{\"o}lkopf et~al., 2001]{scholkopf2001generalized}
Sch{\"o}lkopf, B., Herbrich, R., and Smola, A.~J. (2001).
\newblock A generalized representer theorem.
\newblock In {\em International Conference on Computational Learning Theory},
  pages 416--426. Springer.

\bibitem[Scott and Terrell, 1987]{scott1987biased}
Scott, D.~W. and Terrell, G.~R. (1987).
\newblock Biased and unbiased cross-validation in density estimation.
\newblock {\em Journal of the American Statistical Association}, {\bf
  82}(400):1131--1146.

\bibitem[Shalev-Shwartz et~al., 2011]{shalev2011pegasos}
Shalev-Shwartz, S., Singer, Y., Srebro, N., and Cotter, A. (2011).
\newblock Pegasos: Primal estimated sub-gradient solver for {SVM}.
\newblock {\em Mathematical Programming}, \textbf{127}(1):3--30.

\bibitem[Sheather and Jones, 1991]{sheather1991reliable}
Sheather, S.~J. and Jones, M.~C. (1991).
\newblock A reliable data-based bandwidth selection method for kernel density
  estimation.
\newblock {\em Journal of the Royal Statistical Society: Series B
  (Methodological)}, {\bf 53}(3):683--690.

\bibitem[Silverman, 1986]{silverman1986density}
Silverman, B.~W. (1986).
\newblock {\em Density estimation for statistics and data analysis}.
\newblock Champman \& Hall.

\bibitem[Steinwart and Christmann, 2008]{steinwart2008support}
Steinwart, I. and Christmann, A. (2008).
\newblock {\em Support vector machines}.
\newblock Springer Science \& Business Media.

\bibitem[Tsang et~al., 2005]{tsang2005core}
Tsang, I.~W., Kwok, J.~T., and Cheung, P.-M. (2005).
\newblock Core vector machines: Fast {SVM} training on very large data sets.
\newblock {\em Journal of Machine Learning Research}, \textbf{6}(4):363--392.

\bibitem[Vapnik, 2013]{vapnik2013nature}
Vapnik, V. (2013).
\newblock {\em The Nature of Statistical Learning Theory}.
\newblock Springer Science \& Business Media.

\bibitem[Wahba et~al., 2003]{wahba2003optimal}
Wahba, G., Lin, Y., Lee, Y., and Zhang, H. (2003).
\newblock Optimal properties and adaptive tuning of standard and nonstandard
  support vector machines.
\newblock In {\em Nonlinear Estimation and Classification}, pages 129--147.
  Springer.

\bibitem[Wang and Ma, 2021]{wang2021optimal}
Wang, H. and Ma, Y. (2021).
\newblock Optimal subsampling for quantile regression in big data.
\newblock {\em Biometrika}, \textbf{108}(1):99--112.

\bibitem[Wang et~al., 2018]{wang2018optimal}
Wang, H., Zhu, R., and Ma, P. (2018).
\newblock Optimal subsampling for large sample logistic regression.
\newblock {\em Journal of the American Statistical Association},
  \textbf{113}(522):829--844.

\bibitem[Wang et~al., 2012]{wang2012breaking}
Wang, Z., Crammer, K., and Vucetic, S. (2012).
\newblock Breaking the curse of kernelization: Budgeted stochastic gradient
  descent for large-scale svm training.
\newblock {\em Journal of Machine Learning Research},
  \textbf{13}(1):3103--3131.

\bibitem[Williams and Seeger, 2000]{williams2001using}
Williams, C. and Seeger, M. (2000).
\newblock Using the nystr{\"o}m method to speed up kernel machines.
\newblock {\em Advances in Neural Information Processing Systems}, \textbf{13}.

\bibitem[Yu et~al., 2022]{yu2022optimal}
Yu, J., Wang, H., Ai, M., and Zhang, H. (2022).
\newblock Optimal distributed subsampling for maximum quasi-likelihood
  estimators with massive data.
\newblock {\em Journal of the American Statistical Association},
  \textbf{117}(537):265--276.

\bibitem[Zhan, 2004]{zhan2004matrix}
Zhan, X. (2004).
\newblock {\em Matrix Inequalities}.
\newblock Springer.

\bibitem[Zhang et~al., 2021]{zhang2021optimal}
Zhang, T., Ning, Y., and Ruppert, D. (2021).
\newblock Optimal sampling for generalized linear models under measurement
  constraints.
\newblock {\em Journal of Computational and Graphical Statistics},
  \textbf{30}(1):106--114.

\end{thebibliography}


\newpage
\fancyhead{}
\pagestyle{plain}
\setcounter{page}{1}


	\centerline{\large\bf  Supplementary Material for ``LEVERAGE CLASSIFIER: ANOTHER LOOK AT}
\vspace{6pt}
\centerline{\large\bf SUPPORT VECTOR MACHINE"}
\vspace{.25cm}

\author{Author(s)}
\vspace{.4cm}
\centerline{Yixin Han\textsuperscript{1}, Jun Yu\textsuperscript{2}, Nan Zhang\textsuperscript{3}, Cheng Meng\textsuperscript{4}, Ping Ma\textsuperscript{5}, Wenxuan Zhong\textsuperscript{5}, and Changliang Zou\textsuperscript{1}} \vspace{.4cm}

\centerline{\it \textsuperscript{1}School of Statistics and Data Science, LPMC $\&$ KLMDASR, Nankai University, Tianjin, P.R. China}

\centerline{\it\textsuperscript{2}School of Mathematics and Statistics, Beijing Institute of Technology, Beijing, P.R.China}

\centerline{\it \textsuperscript{3}School of Data Science, Fudan University, Shanghai, P.R.China}

\centerline{\it  \textsuperscript{4}Institute of Statistics and Big Data, Renmin University, Beijing, P.R.China}

\centerline{\it \textsuperscript{5}Department of Statistics, University of Georgia, Athens, GA, USA}

\vspace{.55cm} \fontsize{9}{11.5pt plus.8pt minus
	.6pt}\selectfont


\def\thelemma{S.\arabic{lemma}}
\def\thepro{S.\arabic{pro}}
\def\theequation{S.\arabic{equation}}
\def\thetable{S\arabic{table}}
\def\thefigure{S\arabic{figure}}
\setcounter{figure}{0}
\setcounter{table}{0}

\fontsize{12}{14pt plus.8pt minus .6pt}\selectfont

This supplementary material contains the proofs of technical results and some additional simulation results.

\beginsupplement
\renewcommand{\thesection}{}
\setcounter{Lem}{0}
\renewcommand{\theLem}{A.\arabic{Lem}}

\renewcommand{\theequation}{}
\renewcommand{\theequation}{S.\arabic{equation}}

	\beginsupplement
\renewcommand{\thesection}{}
\setcounter{Lem}{0}
\renewcommand{\theLem}{A.\arabic{Lem}}

\renewcommand{\theequation}{}
\renewcommand{\theequation}{S.\arabic{equation}}

\appendix
\section*{Appendix A: Useful Lemma}
\renewcommand{\thesection}{}
\setcounter{Lem}{0}
\renewcommand{\theLem}{S.\arabic{Lem}}

The following Lemma is a multivariate extension of the martingale central limit theorem, see Lemma 4 in \citet{zhang2021optimal} for details.

\begin{Lem}[Multivariate version of martingale CLT]\label{martingaleCLT}
	Let $\left\{{\bm\eta}_{ki},i=1,\ldots, N_k\right\}$ be a martingale difference sequence in $\mathbb{R}^{p}$ relative to the filtration $\left\{\mathcal{F}_{ki},i=0,1,\ldots, N_k\right\}$ and let ${\bm Z}_k\in\mathbb{R}^{p}$ be an $\mathcal{F}_{k0}$-measurable random vector for $k=1,2,3,\ldots$. Denote $ \bm R_k=\sum\nolimits_{i=1}^{N_k}\bm\eta_{ki}$. Assume the following conditions hold.
	\begin{itemize}
		\item [(i)] $\lim\nolimits_{k \to \infty}\sum\nolimits_{i=1}^{N_k}{\E}\left(\|\bm\eta_{ki}\|^4\right)=0$.
		\item [(ii)] $\lim\nolimits_{k\to\infty}{\E}\left\{\|\sum\nolimits_{i=1}^{N_k}{\E}\left(\bm\eta_{ki}\bm\eta_{ki}^{\top}\mid\mathcal{F}_{k,i-1}\right)-{ \bf B}_k\|^2\right\}=0$ for some sequence of positive-definite matrices $\left\{{ \bf B}_k\right\}_{k=1}^\infty$ with $\sup\nolimits_{k}\lambda_{{\max}}({ \bf B}_k)<\infty$, say that the largest eigenvalue is uniformly bounded.
		\item [(iii)] For a probability distribution $\bm L_0$, $*$ denotes convolution and $\bm L(\cdot)$ denotes the {law} of random variables, $\bm L( \bm Z_k)*\mathcal{N}({\bf 0},{\bf B}_k){\rightarrow} \bm L_0$, where the convergence is in distribution.
	\end{itemize}
	Then we have
	\begin{align*}
		\bm L(\bm Z_k+\bm R_k){\rightarrow} \bm L_0.
	\end{align*}
\end{Lem}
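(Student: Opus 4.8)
The plan is to work entirely at the level of characteristic functions and to exploit L\'evy's continuity theorem in $\mathbb{R}^p$. Fix $\bm t\in\mathbb{R}^p$ and write $\phi_{\bm Z_k}(\bm t)=\E\{e^{i\bm t^\top\bm Z_k}\}$. Since convolution of laws corresponds to multiplication of characteristic functions, assumption (iii) says that $\bm L_0$ has characteristic function $\lim_{k\to\infty}\phi_{\bm Z_k}(\bm t)\,e^{-\frac12\bm t^\top\bm B_k\bm t}$, so by the continuity theorem it suffices to prove
\begin{align*}
\E\left\{e^{i\bm t^\top(\bm Z_k+\bm R_k)}\right\}-\phi_{\bm Z_k}(\bm t)\,e^{-\frac12\bm t^\top\bm B_k\bm t}\longrightarrow 0
\end{align*}
for every fixed $\bm t$. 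Because $\bm Z_k$ is $\mathcal{F}_{k0}$-measurable while $\bm R_k=\sum_{i=1}^{N_k}\bm\eta_{ki}$ is built from increments adapted to the finer $\sigma$-algebras $\mathcal{F}_{ki}$, I would condition on $\mathcal{F}_{k0}$ and factor the first term as $\E\{e^{i\bm t^\top\bm Z_k}\,\E(e^{i\bm t^\top\bm R_k}\mid\mathcal{F}_{k0})\}$. As $|e^{i\bm t^\top\bm Z_k}|\le 1$, the whole display is controlled once I establish the conditional characteristic-function estimate
\begin{align*}
\E\left(e^{i\bm t^\top\bm R_k}\mid\mathcal{F}_{k0}\right)-e^{-\frac12\bm t^\top\bm B_k\bm t}\xrightarrow{P}0 ,
\end{align*}
after which bounded convergence (the left side is at most $2$ in modulus) upgrades this to convergence of the unconditional expectation.

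The core is therefore a conditional martingale central limit theorem for the scalar array $\eta_{ki}^{(t)}:=\bm t^\top\bm\eta_{ki}$, which is again a martingale difference sequence relative to $\{\mathcal{F}_{ki}\}$ with partial sum $\bm t^\top\bm R_k$, and here assumptions (i) and (ii) supply precisely the two ingredients of the classical proof. Assumption (ii) gives, after projecting onto $\bm t$, that the conditional variance $\sum_{i=1}^{N_k}\E(\eta_{ki}^{(t)2}\mid\mathcal{F}_{k,i-1})=\bm t^\top\big(\sum_i\E(\bm\eta_{ki}\bm\eta_{ki}^\top\mid\mathcal{F}_{k,i-1})\big)\bm t$ converges to the deterministic target $\bm t^\top\bm B_k\bm t$ in $L^2$, hence in probability. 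Assumption (i) gives the Lyapunov-type bound $\sum_i\E|\eta_{ki}^{(t)}|^4\le\|\bm t\|^4\sum_i\E\|\bm\eta_{ki}\|^4\to 0$, from which follow both the negligibility $\max_i|\eta_{ki}^{(t)}|\xrightarrow{P}0$ and the conditional Lindeberg condition $\sum_i\E\{\eta_{ki}^{(t)2}\I(|\eta_{ki}^{(t)}|>\varepsilon)\mid\mathcal{F}_{k,i-1}\}\xrightarrow{P}0$ (via $u^2\I(|u|>\varepsilon)\le u^4/\varepsilon^2$ and Markov's inequality). With these in hand I would run the standard characteristic-function argument under $\P(\cdot\mid\mathcal{F}_{k0})$ — either McLeish's product device $\prod_i(1+i\eta_{ki}^{(t)})$ or a second-order Taylor expansion of $e^{iu}$ with the remainder absorbed by the fourth-moment bound — to obtain the displayed conditional estimate. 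This is exactly the content of Lemma~4 of \citet{zhang2021optimal}, whose multivariate form I am invoking.

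The step I expect to be the main obstacle is the passage between the unconditional and conditional formulations. Assumptions (i) and (ii) are phrased through unconditional expectations, whereas the martingale CLT machinery must be driven by statements holding in probability given $\mathcal{F}_{k0}$; reconciling the two requires repeated use of Markov's inequality, together with the uniform bound $\sup_k\lambda_{\max}(\bm B_k)<\infty$ to keep the Gaussian limit $e^{-\frac12\bm t^\top\bm B_k\bm t}$ nondegenerate and the error terms uniformly integrable. A secondary subtlety is that $\bm Z_k$ and $\bm R_k$ share the initial $\sigma$-algebra $\mathcal{F}_{k0}$ and are \emph{not} independent at finite $k$: the convolution structure of $\bm L_0$ in (iii) emerges only in the limit, and the justification that the non-constant factor $e^{i\bm t^\top\bm Z_k}$ may be separated from $\bm R_k$ rests entirely on the conditional estimate being established uniformly enough for bounded convergence to close the argument.
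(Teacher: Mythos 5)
Your proposal is correct, but note that the paper contains no proof of this lemma to compare against: the statement is imported verbatim from the literature, with the sentence preceding it deferring entirely to Lemma 4 of \citet{zhang2021optimal} --- the same result you invoke at the end of your own argument. What you supply beyond the paper is the proof skeleton underlying that cited lemma, and it is sound. The reduction via characteristic functions and L\'evy continuity is valid; the factorization $\E\{e^{i\bm t^\top(\bm Z_k+\bm R_k)}\}=\E\{e^{i\bm t^\top\bm Z_k}\,\E(e^{i\bm t^\top\bm R_k}\mid\mathcal{F}_{k0})\}$ is legitimate since $e^{i\bm t^\top\bm Z_k}$ is bounded and $\mathcal{F}_{k0}$-measurable; projecting (ii) onto $\bm t$ converts the $L^2$ statement into convergence in probability of the conditional variance $\bm t^\top\bigl(\sum_i\E(\bm\eta_{ki}\bm\eta_{ki}^\top\mid\mathcal{F}_{k,i-1})\bigr)\bm t$ toward $\bm t^\top\bm B_k\bm t$; (i) yields the conditional Lindeberg condition via $u^2\I(|u|>\varepsilon)\le u^4/\varepsilon^2$; and dominated convergence (the integrand is bounded by $2$ in modulus) upgrades the conditional estimate to the unconditional one. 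You also correctly isolate the genuine subtlety: an off-the-shelf martingale CLT for $\bm R_k$ alone would not suffice, because $\bm Z_k$ and $\bm R_k$ are dependent at finite $k$; what is needed is the stable/conditional form $\E(e^{i\bm t^\top\bm R_k}\mid\mathcal{F}_{k0})-e^{-\frac12\bm t^\top\bm B_k\bm t}\to 0$ in probability, which is exactly what McLeish's product device run under $\P(\cdot\mid\mathcal{F}_{k0})$ delivers, since the conditional mean of $\prod_i(1+i\bm t^\top\bm\eta_{ki})$ equals $1$ by the martingale-difference property and the tower rule. The only incompleteness is that the uniform-integrability bookkeeping of that product device (controlling $\prod_i\bigl(1+(\bm t^\top\bm\eta_{ki})^2\bigr)$ via truncation or stopping times) is left unexecuted; since the paper delegates the entire proof to \citet{zhang2021optimal}, this does not constitute a gap relative to the paper's own treatment.
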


\section*{Appendix B: Proof of Theorem \ref{Bahadur-type representation}}
\begin{proof}
	Denote
	\[
	L_n(\bm\beta)= \frac{1}{n}\sum\limits_{i=1}^n\frac{1}{N\pi_i^*}\left[1-Y_i^*f({{\X}}_i^*,\bm\beta)\right]_{+}, L_N(\bm\beta)=\frac{1}{N}\sum\limits_{j=1}^N\left[1-Y_jf({{\X}}_j,\bm\beta)\right]_{+},
	\]
	\[
	l_{\lambda,n}(\bm\beta)=\frac{1}{n}\sum\limits_{i=1}^n\frac{1}{N\pi_i^*}\left[1-Y_i^*f({{\X}}_i^*,\bm\beta)\right]_{+}+\frac{\lambda}{2}\|\bm\beta_1\|^2.
	\]

	The proof can be divided into the following intermediate parts.
	
	First, we consider the influence of a fixed $\lambda$.
	For a fixed $\bm\theta=(1,\bm\theta_1^\top)^\top\in\mathbb{R}^{p+1}$, define
	\begin{align*}
		\Lambda_n({\bm\theta})=n\left\{l_{\lambda,n}\left({\bm\beta}^{\dagger}+\frac{\bm\theta}{\sqrt{n}}\right)-l_{\lambda,n}\left({\bm\beta}^{\dagger}\right)\right\},~~T_n(\bm\theta)=\mathbb{E}\left\{\Lambda_n(\bm\theta)\right\}.
	\end{align*}
	
	Observe that
	\begin{align*}
		\Lambda_n(\bm\theta) =& \sum\limits_{i=1}^n\frac{1}{N\pi_i^*}\left\{\left[1-Y_i^*f\left({{\X}}_i^*,{\bm\beta}^{\dagger}+\frac{\bm\theta}{\sqrt{n}}\right)\right]_+
		- \left[1-Y_i^*f({{\X}}_i^*,{\bm\beta}^{\dagger})\right]_+\right\}\\
		&+n\frac{\lambda}{2}\left(\|{\bm\beta}^{\dagger}_{1} +\frac{\bm\theta_1}{\sqrt{n}}\|^2-\|{\bm\beta}^{\dagger}_{1}\|^2\right),
	\end{align*}
	and $
	{\E}\left\{L_n(\bm\beta)\right\} = {\E}\left[{\E}\left\{L_n(\bm\beta)\mid \mathcal{D}_N\right\}\right] = L(\bm\beta)=  {\E}\left[1-Yf({{\X}},\bm\beta)\right]_+
	$. Under Assumption \ref{derivativeHS}, we assume  $\bm\beta_{1}^{\dagger}\neq 0$ without loss of generality. By Lemma 3 in \citet{koo2008bahadur}, we have 
	\begin{align*}
		T_n(\bm\theta) &= n\left\{L\left({\bm\beta}^{\dagger}+\frac{\bm\theta}{\sqrt{n}}\right)-L({\bm\beta}^{\dagger})\right\}+\frac{\lambda}{2}\left(\|\bm\theta_1\|^2+2\sqrt{n}\bm\theta_1^\top{\bm\beta}^{\dagger}_{1}\right),\\
		&=\frac{1}{2}\bm\theta^\top {\bf H}(\breve{\bm\beta})\bm\theta + \frac{\lambda}{2}\left(\|\bm\theta_1\|^2+2\sqrt{n}\bm\theta_1^\top{\bm\beta}^{\dagger}_{1}\right),
	\end{align*}
	by applying Taylor expansion of $L(\bm\beta)$ around ${\bm\beta}^{\dagger}$,
	where $\breve{\bm\beta}={\bm\beta}^{\dagger}+(\bm\theta/\sqrt{n})t$ for some $0< t < 1$.
	
	Define ${\bf D}_{ij}(\bm\alpha)={ \bf H}({\bm\beta}^{\dagger}+\bm\alpha)_{ij}-{ \bf H}({\bm\beta}^{\dagger})_{ij}$ for $0\leq i,j\leq p+1$. By Assumption \ref{density}, ${\bf H}({\bm\beta})$ is continuous in ${\bm\beta}$. Then, for any $\varepsilon_1>0$, there exist $\delta_1>0$ such that ${\bf D}_{ij}(\bm\alpha)<\varepsilon_1$ if $\|\bm\alpha\|<\delta_1$ for all $0\leq i,j\leq p+1$. Thus, for sufficiently large $n$ such that $\|(\bm\theta/\sqrt{n})t\|<\delta_1$
	\begin{align*}
		\left|\bm\theta^\top\left({\bf H}(\breve{\bm\beta})-{\bf H}({\bm\beta}^{\dagger})\right)\bm\theta\right|\leq \sum\limits_{i,j}|\bm\theta_i||\bm\theta_j|\left|{\bf D}_{ij}\left(\frac{\bm\theta}{\sqrt{n}}t\right)\right|\leq 2\varepsilon_1\|\bm\theta\|^2,
	\end{align*}
	then $\bm\theta^\top { \bf H}(\breve{\bm\beta})\bm\theta/2=\bm\theta^\top {\bf H}({\bm\beta}^{\dagger})\bm\theta/2 + o(1)$ as $n\to\infty$. Combining the assumption that $\lambda = o(n^{-1/2})$, we have
	\begin{align*}
		T_n(\bm\theta) = \frac{1}{2}\bm\theta^\top {\bf H}({\bm\beta}^{\dagger})\bm\theta + o(1).
	\end{align*}
	
	Next, we would like to provide an expansion of $\Lambda_n(\bm\theta)$ {under Assumptions \ref{density}--\ref{derivativeHS}.} Let
	${\bm W}_n = - n^{-1} \sum\nolimits_{i=1}^n\left({N\pi_i^*}\right)^{-1}\xi_i^*Y_i^*\widetilde{{{\X}}}_i^*$, where $\xi_i^*={\I}\left(Y_i^*f({{\X}}_i^*,{\bm\beta}^{\dagger})\leq 1\right)$.
	If we define
	\begin{align*}
		R_{i,n}(\bm\theta)&=\frac{1}{N\pi_i^*}\left\{\left[1-Y_i^*f\left({{\X}}_i^*,{\bm\beta}^{\dagger}+\frac{\bm\theta}{\sqrt{n}}t\right)\right]_+-\left[1-Y_i^*f\left({{\X}}_i^*,{\bm\beta}^{\dagger}\right)\right]_++\xi_i^*Y_i^*f\left({{\X}}_i^*,\frac{\bm\theta}{\sqrt{n}}\right)\right\},\\
		R_{j,N}(\bm\theta)&=\left[1-Y_jf\left({{\X}}_j,{\bm\beta}^{\dagger}+\frac{\bm\theta}{\sqrt{n}}t\right)\right]_+-\left[1-Y_jf\left({{\X}}_j,{\bm\beta}^{\dagger}\right)\right]_++\xi_jY_jf\left({{\X}}_j,\frac{\bm\theta}{\sqrt{n}}\right),
	\end{align*}
	where $i=1,\ldots,n$ and $j=1,\ldots,N$. Recall that ${\E}\{\left(N\pi_i^*\right)^{-1}\xi_i^*Y_i^*\widetilde{ \X}_i^*\}={\bm S}(\bm\beta^\dagger)=0$. Recall the definitions of $T_n(\bm\theta)$ and $\bm W_n$, we have
	\begin{align}\label{eq:s1}
		\Lambda_n(\bm\theta)=&\sum\limits_{i=1}^n\frac{1}{N\pi_i^*}\left[1-Y_i^*f\left(\bm{X}_i^*,\bm\beta^\dagger+\frac{\bm\theta}{\sqrt{n}}\right)\right]_+ -nL\left(\bm\beta^\dagger+\frac{\bm\theta}{\sqrt{n}}\right)\nonumber\\
		&-\sum\limits_{i=1}^n\frac{1}{N\pi_i^*}\left[1-Y_i^*f\left(\bm{X}_i^*,\bm\beta^\dagger\right)\right]_+ +nL\left(\bm\beta^\dagger\right)+\frac{\lambda}{2}\left(\|\bm\theta_1\|^2+2\sqrt{n}\bm\theta_1^\top{\bm\beta}^{\dagger}_{1}\right)\nonumber\\
		&+\sum\limits_{i=1}^n\frac{1}{N\pi_i^*}\xi_i^*Y_i^*(\widetilde{\bm{X}}_i^{*})^{\top}\frac{\bm\theta}{\sqrt{n}}-\sum\limits_{i=1}^n\frac{1}{N\pi_i^*}\xi_i^*Y_i^*(\widetilde{\bm{X}}_i^{*})^{\top}\frac{\bm\theta}{\sqrt{n}}\nonumber\\
		=& T_n(\bm\theta)+\sqrt{n}{\bm W}_n^{\top}\bm\theta+\sum\limits_{i=1}^n\left[R_{i,n}(\bm\theta)-\mathbb{E}\left\{R_{i,n}(\bm\theta)\right\}\right].
	\end{align}
	
	Recall that $\left[\cdot\right]_+$ denotes the hinge loss. We define $\varphi ={\I}\left(a\leq 1\right)$ and $D=\left[1-z\right]_+-\left[1-a\right]_++\varphi(z-a)$. 
	Then we have
	\begin{align}\label{tailtermprof}
		\begin{split}
			D &= (1-z){\I}(a>1,z\leq 1)+(z-1){\I}(a<1,z>1)\\
			& \leq \left|z-a\right|{\I}(a>1,z\leq 1) +\left|z-a\right|{\I}(a<1,z>1)\\
			&=\left|z-a\right|\left\{{\I}(a>1,z\leq 1)+{\I}(a<1,z>1)\right\}\\
			&\leq \left|z-a\right|{\I}\left(\left|1-a\right|\leq \left|z-a\right|\right).
		\end{split}
	\end{align}
	
	Let $z_i=Y_i^*f({\X}_i^*,\bm\beta^{\dagger}+\bm\theta/\sqrt{n})$ and $a_i=Y_i^*f({\X}_i^*,\bm\beta^\dagger)$ in \eqref{tailtermprof}, we have 
	\begin{align}\label{tailterm}
		\begin{split}
			\left|R_{i,n}(\bm\theta)\right|&\leq \frac{1}{N\pi_i^*}\left|\frac{f({{\X}}_i^*,\bm\theta)}{\sqrt{n}}\right|U_i\left(\left|\frac{f({{\X}}_i^*,\bm\theta)}{\sqrt{n}}\right|\right),
		\end{split}
	\end{align}
	where $U_i(t)={\I}\left(\left|1-Y_i^*f({{{\X}}_i^*},{\bm\beta}^{\dagger})\right|\leq t\right)$ with respect to the $i$-th subsample point for $t\in\mathbb{R}$.  By \eqref{tailterm}, for each fixed $\bm\theta$ we obtain
	\begin{align*}
		{\E}\left[\sum\limits_{i=1}^n\left\{R_{i,n}(\bm\theta)-{\E}\left(R_{i,n}(\bm\theta)\right)\right\}\right]^2
		&={\E}\left\{{\E}\left[\sum\limits_{i=1}^n\left\{R_{i,n}(\bm\theta)-{\E}\left(R_{i,n}(\bm\theta)\right)\right\}\right]^2\bigg{|} \mathcal{D}_N\right\}\nonumber\\
		&=\frac{n}{N^2}\sum\limits_{j=1}^N{\E}\left[\frac{1}{\pi_j}\left\{R_{j,N}(\bm\theta)-{\E}\left(R_{j,N}(\bm\theta)\right)\right\}^2\right]\nonumber\\
		&\leq \frac{n}{N^2}\sum\limits_{j=1}^N{\E}\left\{\frac{1}{\pi_j}R_{i,N}^2(\bm\theta)\right\}\\
		&\leq \frac{n}{N^2}\sum\limits_{j=1}^N{\E}\left\{\frac{1}{\pi_j}\left(1+\|{{\X}}_j\|^2\right)\frac{\|\bm\theta\|^2}{n}U_j\left(\sqrt{1+\|{{\X}}_j\|^2}\frac{\|\bm\theta\|}{\sqrt{n}}\right)\right\}\nonumber\\
		&\leq \frac{\|\bm\theta\|^2}{N^2}\sum\limits_{j=1}^N{\E}\left\{\frac{1}{\pi_j}\left(1+\|{{\X}}_j\|^2\right)U_j\left(\sqrt{1+\|{{\X}}_j\|^2}\frac{\|\bm\theta\|}{\sqrt{n}}\right)\right\}.\nonumber
	\end{align*}
	
	By Assumption \ref{density} implies that ${\E}(\|{{\X}}\|^4)<\infty$, there exists $c_1$ such that \[
	{\E}\left\{(1+\|{\X}\|^4){\I}\left(\|{{\X}}\|>c_1\right)\right\}<\varepsilon_2/2,
	\]
	for any $\varepsilon_2>0$. Let $U(t)={\I}\left(\left|1-Yf({{{\X}}},{\bm\beta}^{\dagger})\right|\leq t\right)$ for $t\in\mathbb{R}$. By Assumption \ref{approxexpansion} and holder inequality, we have
	\begin{align*}
		&\frac{1}{N^2}\sum\limits_{j=1}^N{\E}\left\{\frac{1}{\pi_j}\left(1+\|{{\X}}_j\|^2\right)U_j\left(\sqrt{1+\|{{\X}}_j\|^2}\frac{\|\bm\theta\|}{\sqrt{n}}\right)\right\}\\
		\leq&\frac{1}{N^2}\sum\limits_{j=1}^N{\E}\left\{\frac{1}{{\pi}_j}\left(1+\|{{\X}}_j\|^2\right){\I}\left(\|{{\X}}_j\|>c_1\right)\right\}+\frac{1}{N^2}\sum\limits_{j=1}^N{\E}\left\{\frac{1+c_1^2}{{\pi}_j}U\left(\sqrt{1+c_1^2}\frac{\|\bm\theta\|}{\sqrt{n}}\right)\right\}\\
		\leq&\sqrt{\E \left(\frac{1}{N^3}\sum_{j=1}^N \frac{1}{\pi_j^2}\right)}\sqrt{\E\left\{\frac{1}{N}\sum_{j=1}^N\left(1+\|{{\X}}_j\|^2\right)^2{\I}\left(\|{{\X}}_j\|>c_1\right)\right\}}\\
		&+(1+c_1^2)\sqrt{\E\left( \frac{1}{N^3}\sum_{j=1}^N \frac{1}{\pi_j^2}\right)}\sqrt{\frac{1}{N}\sum_{j=1}^N {\P}\left\{U\left(\sqrt{1+c_1^2}{\|\bm\theta\|}/{\sqrt{n}}\right)=1\right\}},
	\end{align*}
	
	By Assumption \ref{density}, the conditional distribution of ${\X}$ given $Y$ is not degenerate, which implies $\lim_{t\to 0}\P\left(U(t)=1\right)=0$. We can take a large $c_2$ such that \[
	\P\left\{U\left(\sqrt{1+c_1^2}{\|\bm\theta\|}/{\sqrt{n}}\right)=1\right\} < \varepsilon_2/\left\{2(1+c_1^2)\right\},
	\]
	for $n>c_2$. By Assumption \ref{approxexpansion}, it proves that
	$
	{\E}\left[\sum\nolimits_{i=1}^n\left\{R_{i,n}(\bm\theta)-{\E}\left(R_{i,n}(\bm\theta)\right)\right\}\right]^2\to 0.$
	
	By \eqref{eq:s1}, for each fixed $\bm\theta$
	\begin{align*}
		\Lambda_n(\bm\theta)=\frac{1}{2}\bm\theta^\top { \bf H}({\bm\beta}^{\dagger})\bm\theta+{\sqrt{n}}{ \bm W_n}^\top{\bm\theta}+o_P(1).
	\end{align*}
	
	Last, we devote to giving the Bahadur representation of $\widetilde{\bm\beta}$.
	Let $\bm\kappa_n=-\sqrt{n}{ \bf H}({\bm\beta}^{\dagger})^{-1}{\bm W}_n$ and $\bm\Theta$ be a convex open subset in $\mathbb{R}^{p+1}$. By Convexity Lemma in \citet{pollard1991asymptotics}, we have
	\begin{align*}
		\Lambda_n(\bm\theta)=\frac{1}{2}\left(\bm\theta-\bm\kappa_n\right)^\top {\bf H}({\bm\beta}^{\dagger})\left(\bm\theta-\bm\kappa_n\right)-\frac{1}{2}\bm\kappa_n^\top {\bf H}({\bm\beta}^{\dagger})\bm\kappa_n + r_n(\bm\theta),
	\end{align*}
	where for each compact set $K$ of $\bm\Theta$, the aforementioned  part is shown for every $\bm\theta\in\bm\Theta$, and then we have $\sup_{{\bm\theta}\in K}\left|r_n(\bm\theta)\right|\to 0$ in probability. Lemma \ref{Asynormalsubestimate} shows that $\bm\kappa_n$ is asymptotically normal which will be proved in the next section, then there exists a compact set $K\in \mathcal{B}_\rho$ with probability close to one, where $\mathcal{B}_\rho$ is a closed ball with center $\bm\kappa_n$ and radius $\rho$. Let $\Delta_n=\sup_{\bm\theta\in\mathcal{B}_\rho}\left|r_n(\bm\theta)\right|$. Then we have
	\begin{align}\label{Deltainside}
		\Delta_n\to 0  ~~\mbox{in probability}.
	\end{align}
	
	Next, we discuss the behavior of $\Lambda_n(\bm\theta)$ outside the closed ball $\mathcal{B}_{\rho}$. Consider $\bm\theta=\bm\kappa_n+\gamma{\bm e}$, with $\gamma>\rho$ and the unit vector ${\bm e}$. A boundary point $\bm\theta^\dagger=\bm\kappa_n+\rho{\bm e}$. Under Assumptions \ref{density}--\ref{derivativeHS} and a similar discussion in Lemma 5 of \citet{koo2008bahadur}, there exists a constant $c_3$ such that $\bm\beta^\top {\bf H}(\bm\beta^\dagger)\bm\beta\geq c_3\|\bm\beta\|^2$. Then, by the convexity of $\Lambda_n(\bm\theta)$ and the definition of $\Delta_n$, we have
	\begin{align*}
		\frac{\rho}{\gamma}\Lambda_n(\bm\theta)+\left(1-\frac{\rho}{\gamma}\right)\Lambda_n(\bm\kappa_n)&\geq\Lambda_n\left(\frac{\rho}{\bm\gamma}\bm\theta+\left(1-\frac{\rho}{\gamma}\right)\bm\kappa_n\right)\\
		&=\Lambda_n(\bm\theta^\dagger)\\
		&\geq\frac{1}{2}\left(\bm\theta-\bm\kappa_n\right)^\top {\bf H}({\bm\beta}^{\dagger})\left(\bm\theta-\bm\kappa_n\right)-\frac{1}{2}\kappa_n^\top {\bf H}({\bm\beta}^{\dagger})\bm\kappa_n-\Delta_n\\
		&\geq \frac{c_3}{2}\rho^2+\Lambda_n(\bm\kappa_n)-2\Delta_n,
	\end{align*}
	which implies that
	\begin{align*}
		\inf_{\|\bm\theta-\bm\kappa_n\|>\rho} \Lambda_n(\bm\theta)\geq \Lambda_n(\bm\kappa_n)+\left(\frac{c_3}{2}\rho^2-2\Delta_n\right).
	\end{align*}
	
	By \eqref{Deltainside}, we can take $\Delta_n$ such that $2\Delta_n<c_3\rho^2/2$
	with probability tending to one.
	Thus $\inf_{\|\bm\theta-\bm\kappa_n\|>\rho} \Lambda_n(\bm\theta)\geq \Lambda_n(\bm\kappa_n).$ This implies the minimum of $\Lambda_n(\bm\theta)$ cannot occur at any $\bm\theta$ with $\|\bm\theta-\bm\kappa_n\|>\rho$. Hence for each $\rho>0$ and let $\widetilde{\bm\theta}_n=\sqrt{n}(\widetilde{\bm\beta}-{\bm\beta}^{\dagger})$,
	we have $\P(\|\widetilde{\bm\theta}_n-\bm\kappa_n\|>\rho) \to 0$. Thus
	\begin{align*}
		\sqrt{n}(\widetilde{\bm\beta}- {\bm\beta}^{\dagger}) = -{\sqrt{n}}{\bf H}({\bm\beta}^{\dagger})^{-1}{ \bm W}_n+o_{P}\left(1\right).
	\end{align*}
	
	The theorem follows the above arguments. \hfill$\Box$
\end{proof}

\section*{Appendix C: Proof of asymptotic normality}

Recall that
\begin{align}
	{\bm M}&=\sum\limits_{i=1}^n { \bm M}_i=\sum\limits_{i=1}^n\frac{1}{nN\pi_i^*}\xi_i^*Y_i^*\widetilde{{\X}}_i^*-\sum\limits_{i=1}^n\left(\frac{1}{nN}\sum\limits_{j=1}^{N}\xi_jY_j\widetilde{{\X}}_j\right),\label{eq:s6}\\
	{\bm Q}&=\frac{1}{N}\sum\limits_{j=1}^N\xi_jY_j\widetilde{{\X}}_j,~~ {\bm T}=\frac{1}{n}\sum\limits_{i=1}^n\frac{1}{N\pi_i^*}\xi_i^*Y_i^*\widetilde{{\X}}_i^*,~~ {\bf B}_N={\bf V}_{T}^{-1/2}{\bf V}_{ M}{\bf V}_{T}^{-1/2},\nonumber
\end{align}
where ${\bf V}_T$ and ${\bf V}_M$ are the variances of $\bm T$ and $\bm M$.
\begin{Lem}\label{MDS}
	$\left\{{\bm M}_i,i=1,\ldots,n\right\}$ {in \eqref{eq:s6}} is a martingale difference sequence relative to the filtration $\left\{\mathcal{F}_{N,i},i=1,\ldots,n\right\}$.
\end{Lem}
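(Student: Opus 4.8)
The plan is to verify directly the three defining properties of a martingale difference sequence adapted to the filtration $\{\mathcal{F}_{N,i}\}_{i=1}^n$: namely, that each ${\bm M}_i$ is (i) $\mathcal{F}_{N,i}$-measurable, (ii) integrable, and (iii) satisfies $\E({\bm M}_i \mid \mathcal{F}_{N,i-1}) = {\bm 0}$. The structural fact that drives the whole argument is that Step~2 of Algorithm~\ref{GeneralAlgorithm} performs subsampling \emph{with replacement}: conditional on the full sample $\mathcal{F}_{N,0} = \sigma({\X}_1^N, Y_1^N)$, the successive draws are independent and each is distributed according to the probabilities $\{\pi_j\}_{j=1}^N$. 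Consequently, writing $J_i$ for the (random) index selected at the $i$-th step, so that $({\X}_i^*, Y_i^*, \pi_i^*, \xi_i^*) = ({\X}_{J_i}, Y_{J_i}, \pi_{J_i}, \xi_{J_i})$, we have $\P(J_i = j \mid \mathcal{F}_{N,i-1}) = \pi_j$ independently of the first $i-1$ draws, since $\mathcal{F}_{N,i-1}$ adds to $\mathcal{F}_{N,0}$ only information about those earlier draws.

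First I would dispatch measurability and integrability. The first term of ${\bm M}_i$ is a function of the $i$-th draw and the full sample, hence $\sigma(*_i) \vee \mathcal{F}_{N,0} \subseteq \mathcal{F}_{N,i}$-measurable, while the second term $\frac{1}{nN}\sum_{j=1}^N \xi_j Y_j \widetilde{{\X}}_j$ is $\mathcal{F}_{N,0}$-measurable and therefore also $\mathcal{F}_{N,i}$-measurable; thus ${\bm M}_i$ is adapted. For integrability, the only potentially large factor is the inverse-probability weight $1/\pi_i^*$; bounding $\E\|{\bm M}_i\|$ by a Cauchy--Schwarz (Hölder) split of $\E\{\pi_j^{-1}(1+\|{\X}_j\|)\}$ and invoking the moment bound $\frac{1}{N^3}\sum_{j=1}^N \E(\pi_j^{-2}) = O(1)$ from Assumption~\ref{approxexpansion} together with the fourth-moment condition of Assumption~\ref{density} gives $\E\|{\bm M}_i\| < \infty$.

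The core step is the conditional-mean computation. Using the with-replacement sampling distribution and the inverse-probability weighting, the weight $1/\pi_j$ cancels the selection probability $\pi_j$:
\begin{align*}
	\E\left(\frac{1}{nN\pi_i^*}\xi_i^* Y_i^* \widetilde{{\X}}_i^* \,\middle|\, \mathcal{F}_{N,i-1}\right)
	= \sum_{j=1}^N \pi_j \cdot \frac{1}{nN\pi_j}\xi_j Y_j \widetilde{{\X}}_j
	= \frac{1}{nN}\sum_{j=1}^N \xi_j Y_j \widetilde{{\X}}_j.
\end{align*}
Since the subtracted second term of ${\bm M}_i$ equals this same $\mathcal{F}_{N,0}$-measurable quantity, it passes unchanged through the conditional expectation, and we obtain $\E({\bm M}_i \mid \mathcal{F}_{N,i-1}) = {\bm 0}$, completing the verification.

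I expect the only genuinely delicate point to be the integrability bound, because the inverse weight $1/\pi_i^*$ can be large when $\pi_j$ is both random and small; this is precisely what Assumption~\ref{approxexpansion} is designed to control, and the remaining measurability and mean-zero steps are clean consequences of the with-replacement design. This lemma is the structural input that lets $\bm M = \sum_{i=1}^n {\bm M}_i$ be treated by the martingale central limit theorem of Lemma~\ref{martingaleCLT} in establishing Theorem~\ref{martingalenormality}.
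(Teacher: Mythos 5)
Your proposal is correct and follows essentially the same route as the paper's proof: adaptedness from the construction of the filtration, and the mean-zero property from the exact cancellation of the selection probability $\pi_j$ against the inverse-probability weight $1/\pi_j$ under with-replacement sampling, so that $\E\left({\bm M}_i \mid \mathcal{F}_{N,i-1}\right)$ reduces to the full-sample average minus itself. Your additional integrability check is a sound (if slightly heavier than necessary) supplement the paper leaves implicit---note that conditioning on $\mathcal{D}_N$ already cancels the weights, so $\E\|{\bm M}_i\|<\infty$ follows from Assumption~\ref{density} alone without invoking Assumption~\ref{approxexpansion}.
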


\begin{proof}
	The $\mathcal{F}_{n,i}$-measurability follows from the definition of $\bm M_i$ and the definition of the  filtration $\left\{\mathcal{F}_{N,i},i=1,\ldots,n\right\}$.  Moreover, we have
	\begin{align*}
		{\E}\left\{{\bm M}_i\mid \mathcal{F}_{N,i-1}\right\} &= {\E}_{Y\mid \X}\left\{\frac{1}{nN\pi_i^*}\xi_i^*Y_i^*\widetilde{{\X}}_i^*\right\}-\frac{1}{nN}\sum\limits_{j=1}^N\xi_jY_j\widetilde{{\X}}_j\\
		&=\frac{1}{nN}\sum\limits_{i=1}^N\xi_iY_i\widetilde{{\X}}_i-\frac{1}{nN}\sum\limits_{j=1}^N\xi_jY_j\widetilde{{\X}}_j\\
		&=0,
	\end{align*}
	where
	${\E}_{Y\mid \X}$ is the expectation with respect to sampling randomness or the conditional expectation of
	$Y$ given ${\X}_1^N$ with  ${{\X}}_1^N=\left({{\X}}_1,\ldots,{{\X}}_N\right)$.
	Then $\left\{{\bm M}_i,i=1,\ldots,n\right\}$ is a martingale difference sequence. \hfill$\Box$
\end{proof}

\begin{Lem}\label{Ling-cond3}
	Suppose Assumptions \ref{density} and \ref{approxexpansion} hold. Let ${\bf V}_T$ and ${\bf V}_Q$ denote the variances of $\bm T$ and $\bm Q$. For any $\bm t\in\mathbb{R}^{p+1}$, we have
	\begin{align*}
		\left|{\E}\left\{\exp\left(i{\bm t}^{\top}{\bf V}_{ T}^{-1/2}{\bm  Q}\right)\right\}-{\E}\left\{\exp\left(i {\bm t}^{\top}{\bf V}_{T}^{-1/2}{\bf V}_{Q}^{1/2}{ \bm A}_0\right)\right\}\right|\to 0,
	\end{align*}
	as $N\to \infty$, where  ${\bm A}_0\sim \mathcal{N}({\bf 0},{\bf I}_{p+1})$.
\end{Lem}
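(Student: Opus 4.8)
The plan is to read this as a multivariate central limit theorem for $\bm Q$, expressed through characteristic functions but with the \emph{mismatched} normalizer ${\bf V}_T^{-1/2}$ in place of ${\bf V}_Q^{-1/2}$. First I would simplify the right-hand expectation: since $\bm A_0\sim\mathcal N(\bm 0,{\bf I}_{p+1})$, the vector ${\bf V}_Q^{1/2}\bm A_0$ is $\mathcal N(\bm 0,{\bf V}_Q)$, so, writing $\bm u:={\bf V}_T^{-1/2}\bm t$ (deterministic, as ${\bf V}_T$ is an unconditional variance matrix), it equals the Gaussian characteristic function $\exp(-\tfrac12\bm u^\top{\bf V}_Q\bm u)$. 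Setting $\tau_N^2:=\bm u^\top{\bf V}_Q\bm u$, the claim reduces to showing $|\varphi_{\bm Q}(\bm u)-e^{-\tau_N^2/2}|\to0$, where $\varphi_{\bm Q}$ denotes the characteristic function of $\bm Q$.

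Next I would exploit the i.i.d. structure of the full sample. Setting $\bm q_j:=\xi_j Y_j\widetilde{\X}_j$ gives $\bm Q=N^{-1}\sum_{j=1}^N\bm q_j$ with i.i.d. summands; these are centered because $\E\bm q_j=-\bm S(\bm\beta^{\dagger})=\bm 0$ by Assumption~\ref{derivativeHS}, and $\E\|\bm q_j\|^3\le\E\|\widetilde{\X}\|^3<\infty$ by the fourth-moment part of Assumption~\ref{density}. By independence $\varphi_{\bm Q}(\bm u)=\prod_{j=1}^N\E[e^{iX_j}]$ with $X_j:=N^{-1}\bm u^\top\bm q_j$, and I would compare this product term-by-term with $\prod_j e^{-\E X_j^2/2}=e^{-\tau_N^2/2}$ (using $\sum_j\E X_j^2=\tau_N^2$). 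Combining the telescoping inequality $|\prod_j a_j-\prod_j b_j|\le\sum_j|a_j-b_j|$, valid since $|a_j|,|b_j|\le1$, with the elementary estimates $|\E e^{iX_j}-(1-\tfrac12\E X_j^2)|\le\tfrac16\E|X_j|^3$ (the first-order term drops as $\E X_j=0$) and $|e^{-\E X_j^2/2}-(1-\tfrac12\E X_j^2)|\le\tfrac18(\E X_j^2)^2$ yields
\[
\bigl|\varphi_{\bm Q}(\bm u)-e^{-\tau_N^2/2}\bigr|\le\frac{1}{6N^2}\E|\bm u^\top\bm q_1|^3+\frac{\tau_N^4}{8N}\le\frac{\|\bm u\|^3}{6N^2}\,\E\|\bm q_1\|^3+\frac{\tau_N^4}{8N}.
\]

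Finally I would show both bounds vanish. Since $\{\bm M_i\}$ is a martingale difference sequence (Lemma~\ref{MDS}) and $\bm Q$ is $\mathcal F_{N,0}$-measurable, $\bm M$ and $\bm Q$ are uncorrelated, so ${\bf V}_T={\bf V}_M+{\bf V}_Q$ and hence ${\bf V}_T^{-1/2}{\bf V}_Q{\bf V}_T^{-1/2}\preceq{\bf I}_{p+1}$; this keeps $\tau_N^2\le\|\bm t\|^2$ bounded, so $\tau_N^4/N\to0$. The crux is the third-moment term, because $\bm u={\bf V}_T^{-1/2}\bm t$ grows with the sample sizes: from the expression for ${\bf V}_T$ in Theorem~\ref{martingalenormality}, its leading part is $n^{-1}$ times a positive-definite matrix of constant order (the source of the $\sqrt n$ rate), giving $\lambda_{\min}({\bf V}_T)\gtrsim n^{-1}$ and $\|\bm u\|=O(\sqrt n)$. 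Consequently $\|\bm u\|^3N^{-2}\E\|\bm q_1\|^3=O(n^{3/2}/N^2)\to0$ in the operating regime $n\le N\to\infty$ (in fact whenever $n=o(N^{4/3})$), which finishes the argument. The hard part is therefore not the Lyapunov-type expansion but quantifying the growth of the mismatched normalizer ${\bf V}_T^{-1/2}$ and verifying that the remainder still decays at that rate.
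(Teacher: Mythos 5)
Your proposal takes a genuinely different route from the paper's. The paper first invokes the classical Lindeberg--Feller CLT for the i.i.d.\ sum $\bm Q$ to obtain ${\bf V}_Q^{-1/2}\bm Q\rightarrow\mathcal N({\bf 0},{\bf I}_{p+1})$, and then transfers this to the mismatched normalizer by substituting the sequence $\bm\varsigma_N={\bf V}_Q^{1/2}{\bf V}_T^{-1/2}\bm t$ into the convergence of characteristic functions (valid because such convergence is uniform on compacts); the only thing to check is $\sup_N\|\bm\varsigma_N\|<\infty$, which the paper deduces from $\sigma_{\max}({\bf V}_Q^{1/2}{\bf V}_T^{-1/2})\le 1$, proved via ${\bf V}_T-{\bf V}_Q={\bf V}_M\succeq 0$ together with a similarity argument and the L\"owner--Heinz theorem. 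You instead prove the characteristic-function convergence from scratch by a Lyapunov/Berry--Esseen-type telescoping estimate evaluated at the growing argument $\bm u={\bf V}_T^{-1/2}\bm t$. Your bound $\tau_N^2\le\|\bm t\|^2$ rests on the same key fact ${\bf V}_T={\bf V}_M+{\bf V}_Q\succeq{\bf V}_Q$ (from the martingale orthogonality), but your direct congruence argument ${\bf V}_T^{-1/2}{\bf V}_Q{\bf V}_T^{-1/2}\preceq{\bf I}_{p+1}$ neatly avoids L\"owner--Heinz. What your route buys is a quantitative, self-contained estimate; what it costs is that you must also control $\|\bm u\|$ itself, a quantity the paper's argument never needs.

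That is where the one real weak point sits. Your claim $\lambda_{\min}({\bf V}_T)\gtrsim n^{-1}$ is asserted by reading off the ``leading part'' of the variance formula in Theorem \ref{martingalenormality}, but that formula splits ${\bf V}_T$ into a first term plus ${\bf C}$, and ${\bf C}$ is not sign-definite (it contains subtracted squares), so lower-bounding the first term alone does not lower-bound ${\bf V}_T$; making the $n^{-1}$ claim rigorous would need an extra argument, and it also imports a rate restriction ($n=o(N^{4/3})$, or your ``$n\le N$'') that is not among the lemma's hypotheses. Fortunately the gap is repairable inside your own framework, more simply than what you attempted: you already have ${\bf V}_T\succeq{\bf V}_Q$, and ${\bf V}_Q=N^{-1}\mathrm{var}(\xi_1Y_1\widetilde{\X}_1)$ has smallest eigenvalue equal to $N^{-1}$ times a positive constant, since nonsingularity of $\E\{\I(Yf(\X,\bm\beta^{\dagger})\le 1)\widetilde{\X}\widetilde{\X}^\top\}$ is needed anyway for ${\bf V}_Q^{-1/2}$, hence for the lemma, to make sense. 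This gives $\|\bm u\|\le\lambda_{\min}({\bf V}_T)^{-1/2}\|\bm t\|=O(\sqrt N)$, so your third-moment term is $O(N^{3/2}/N^{2})=O(N^{-1/2})\to 0$ with no condition linking $n$ and $N$. With that substitution your proof is complete and correct. (A final cosmetic point, shared with the paper: centering of $\bm Q$ uses ${\bm S}(\bm\beta^{\dagger})={\bf 0}$, i.e.\ Assumption \ref{derivativeHS}, even though the lemma cites only Assumptions \ref{density} and \ref{approxexpansion}.)
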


\begin{proof}
	Note $\bm Q$ is a sum of i.i.d mean zero random vectors, $\xi_jY_j\widetilde{{\X}}_j$. The Linderberg-Feller conditions are satisfied by Assumption \ref{density} and Assumption \ref{approxexpansion}, then we have
	\begin{align}\label{Asynormalfullestimate}
		{\bf V}_{Q}^{-1/2}{\bm Q} {\rightarrow} \mathcal{N}\left({\bf 0},{\bf I}_{p+1}\right).
	\end{align}
	
	Furthermore, for any $\bm\varsigma\in\mathbb{R}^{p+1}$ and as $N\to \infty$
	\begin{align*}
		\left|{\E}\left\{\exp\left(i\bm\varsigma^{\top}{\bf V}_{Q}^{-1/2}{ \bm Q}\right)\right\}-{\E}\left\{\exp\left(i\bm\varsigma^{\top}{ \bf A}_0\right)\right\}\right|\to 0.
	\end{align*}
	
	Let $\bm\varsigma={\bf V}_{Q}^{1/2}{\bf V}_{ T}^{-1/2}{\bm t}^{\top}$. For any fixed ${\bm t}$, we need to verify the following condition to prove this lemma
	\begin{align*}
		\sup_{N}\|\bm\varsigma\|<\infty.
	\end{align*}
	
	We note that $\|\bm\varsigma\|\leq \sigma_{\max}\left({\bf V}_{ Q}^{1/2}{\bf V}_{ T}^{-1/2}\right)\cdot\|{\bm t}\|$, where $\sigma_{\max}(\cdot)$ denotes the maximum eigenvalue of the corresponding matrix. Hence it is enough to show $\sigma_{\max}({\bf V}_{ Q}^{1/2}{\bf V}_{ T}^{-1/2})\leq 1$. Since the covariance matrix ${\bf V}_{Q}$ and ${\bf V}_{T}$ are positive-defined, the following equation holds
	\begin{align*}
		{\bf V}_{Q}^{1/2}{\bf V}_{T}^{-1/2}={\bf V}_{ T}^{1/4}\left({\bf V}_{T}^{-1/4}{\bf V}_{ Q}^{1/2}{\bf V}_{T}^{-1/4}\right){\bf V}_{T}^{-1/4},
	\end{align*}
	thus ${\bf V}_{Q}^{1/2}{\bf V}_{T}^{-1/2}$ is similar to ${\bf V}_{T}^{-1/4}{\bf V}_{Q}^{1/2}{\bf V}_{T}^{-1/4}$. It only needs to show $\sigma_{\max}({\bf V}_{T}^{-1/4}{\bf V}_{ Q}^{1/2}{\bf V}_{T}^{-1/4})\leq 1$, which is equal to show
	\begin{align*}
		{\bf I}_{p+1}-{\bf V}_{T}^{-1/4}{\bf V}_{ Q}^{1/2}{\bf V}_{ T}^{-1/4}={\bf V}_{T}^{-1/4}\left({\bf V}_{ T}^{1/2}-{\bf V}_{ Q}^{1/2}\right){\bf V}_{T}^{-1/4}>0,
	\end{align*}
	that is equivalent to show ${\bf V}_{T}^{1/2}-{\bf V}_{Q}^{1/2}$ is positive-defined.
	
	Recall that ${\bm M}= {\bm T}-{\bm Q}$ and by Lemma \ref{martingaleCLT}, we have ${\bf V}_{ T}-{\bf V}_{Q}={\bf V}_{M}> 0$. Then by the L\"{o}wner-Heinz theorem in \citet{zhan2004matrix}, we get ${\bf V}_{T}^{1/2}-{\bf V}_{Q}^{1/2}>0$ which completes the proof of this lemma. \hfill$\Box$
\end{proof}

\begin{Lem}\label{Asynormalsubestimate}
	Suppose Assumptions \ref{density} and \ref{approxexpansion} hold. Then we have
	\begin{align*}
		{\bf V}_{T}^{-1/2}{\bm T}{\rightarrow} \mathcal{N}({\bf 0},{\bf I}_{p+1}).
	\end{align*}
\end{Lem}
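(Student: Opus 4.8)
The plan is to invoke the multivariate martingale central limit theorem of Lemma~\ref{martingaleCLT} applied to the normalized decomposition ${\bf V}_{T}^{-1/2}\bm T={\bf V}_{T}^{-1/2}\bm Q+{\bf V}_{T}^{-1/2}\bm M$. First I would identify $\bm\eta_{ki}={\bf V}_{T}^{-1/2}\bm M_i$; since ${\bf V}_{T}^{-1/2}$ is a fixed matrix, Lemma~\ref{MDS} guarantees that $\{\bm\eta_{ki}\}_{i=1}^n$ remains a martingale difference sequence adapted to $\{\mathcal F_{N,i}\}$, with $\bm R_k=\sum_{i=1}^n\bm\eta_{ki}={\bf V}_{T}^{-1/2}\bm M$. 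I would take $\bm Z_k={\bf V}_{T}^{-1/2}\bm Q$, which is $\mathcal F_{N,0}$-measurable because $\bm Q$ is a function of the full sample alone, and set the centering matrices to be $\bm B_k={\bf B}_N={\bf V}_{T}^{-1/2}{\bf V}_{M}{\bf V}_{T}^{-1/2}$. With these choices $\bm Z_k+\bm R_k={\bf V}_{T}^{-1/2}\bm T$, so it suffices to verify conditions (i)--(iii) and read off the limit law $\bm L_0$.

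I would dispatch condition~(iii) first, as it is the most direct. The key algebraic fact is the variance identity ${\bf V}_{T}={\bf V}_{M}+{\bf V}_{Q}$, which holds because $\E(\bm M\mid\mathcal F_{N,0})=\bm 0$ forces the cross-covariance between the $\mathcal F_{N,0}$-measurable $\bm Q$ and $\bm M$ to vanish. Lemma~\ref{Ling-cond3} already shows that the law of $\bm Z_k={\bf V}_{T}^{-1/2}\bm Q$ converges to $\mathcal N(\bm 0,{\bf V}_{T}^{-1/2}{\bf V}_{Q}{\bf V}_{T}^{-1/2})$; convolving this Gaussian limit with $\mathcal N(\bm 0,{\bf B}_N)=\mathcal N(\bm 0,{\bf V}_{T}^{-1/2}{\bf V}_{M}{\bf V}_{T}^{-1/2})$ and applying the identity gives $\mathcal N(\bm 0,{\bf V}_{T}^{-1/2}({\bf V}_{Q}+{\bf V}_{M}){\bf V}_{T}^{-1/2})=\mathcal N(\bm 0,{\bf I}_{p+1})$. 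Hence $\bm L_0=\mathcal N(\bm 0,{\bf I}_{p+1})$, which is exactly the target. The uniform bound $\sup_k\lambda_{\max}({\bf B}_N)<\infty$ demanded in~(ii) is then immediate, since ${\bf V}_{M}\preceq{\bf V}_{T}$ yields ${\bf B}_N\preceq{\bf I}_{p+1}$.

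The remaining labor is the two moment conditions. For condition~(i) I would expand $\bm M_i$ into its dominant subsampled term $(nN\pi_i^*)^{-1}\xi_i^*Y_i^*\widetilde{\bm X}_i^*$ plus a bounded full-sample average, bound $\|\bm\eta_{ki}\|^4\le\|{\bf V}_{T}^{-1/2}\|^4\|\bm M_i\|^4$, and take conditional expectation given $\mathcal D_N$ to convert the subsample average into a $\pi$-weighted full-sample sum. The summand then carries a negative power of $\pi_j$ multiplied by $\|\widetilde{\bm X}_j\|^4$, which I would control by a Cauchy--Schwarz/Hölder split exactly as in Appendix~B: the finite fourth-moment part of Assumption~\ref{density} handles the $\|\widetilde{\bm X}_j\|$ factors, while the bound on $N^{-3}\sum_j\E(\pi_j^{-2})$ in Assumption~\ref{approxexpansion} — together with the coupling that shrinks ${\bf V}_{T}^{-1/2}$ precisely when the weights are highly non-uniform — drives the sum to zero. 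For condition~(ii) I would compute $\sum_{i=1}^n\E(\bm\eta_{ki}\bm\eta_{ki}^\top\mid\mathcal F_{N,i-1})$; since every sampling step shares the same conditional law given $\mathcal F_{N,0}$, this equals ${\bf B}_N$ plus a centered fluctuation whose $L^2$-norm I would again send to zero via the second-moment control of $\pi_j^{-1}$ and the moment assumption on $\bm X$.

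The hard part will be conditions~(i) and~(ii): the random subsampling weights $\pi_i^*$ preclude any automatic Lyapunov bound, so the negative powers of $\pi_j$ produced by the fourth- and second-moment calculations must be balanced carefully against the normalization ${\bf V}_{T}^{-1/2}$ and routed through Assumption~\ref{approxexpansion}. Condition~(ii) carries the extra bookkeeping of isolating the $\bm\pi$-dependent piece of the conditional second moment — which concentrates on ${\bf V}_{M}$ — from the deterministic remainder, mirroring the first-term-versus-${\bf C}$ split already recorded in Theorem~\ref{martingalenormality}.
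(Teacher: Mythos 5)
Your proposal follows essentially the same route as the paper's own proof: the identical decomposition ${\bf V}_{T}^{-1/2}\bm T={\bf V}_{T}^{-1/2}\bm Q+{\bf V}_{T}^{-1/2}\bm M$, the same instantiation of Lemma~\ref{martingaleCLT} with $\bm Z_k={\bf V}_{T}^{-1/2}\bm Q$ and ${\bf B}_k={\bf B}_N={\bf V}_{T}^{-1/2}{\bf V}_{M}{\bf V}_{T}^{-1/2}$, and the same use of Lemma~\ref{Ling-cond3} together with the identity ${\bf V}_{T}={\bf V}_{Q}+{\bf V}_{M}$ to verify condition~(iii). If anything, your treatment of conditions~(i)--(ii) and of the orthogonality ${\rm Cov}(\bm Q,\bm M)={\bf 0}$ is more explicit than the paper's, which simply asserts that the first two conditions are ``easily satisfied'' by Assumption~\ref{density}.
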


\begin{proof}
	Recall the conditions in Lemma \ref{martingaleCLT} with
	\[
	\bm\eta_{ki}=\bm\eta_{Ni}, {\bm Z}_k ={\bf V}_{T}^{-1/2}{\bm Q}, {\bf B}_k = {\bf B}_N, \bm L_0 \sim\mathcal{N}({\bf 0},{\bf I}_{p+1}).
	\]
	
	By Lemma \ref{MDS}, $\left\{M_i,i=1,\ldots,n\right\}$ is a martingale difference sequence, then the first two conditions in Lemma \ref{MDS} are easily satisfied by Assumption \ref{density}. It suffices to show the third condition in Lemma \ref{martingaleCLT} holds.
	
	By \eqref{Asynormalfullestimate} in Lemma \ref{Ling-cond3}, we have ${\bf V}_{Q}^{-1/2}{\bm Q} {\rightarrow} \mathcal{N}\left({\bf 0},{\bf I}_{p+1}\right)$. Next, we devote ourselves to verifying the third condition in Lemma \ref{martingaleCLT}. Let ${\bf V}_M$ be the variance of $\bm M$. For any ${\bm t}\in\mathbb{R}^{p+1}$, we have the following characteristic function
	\begin{align*}
		&{\E}\left\{\exp\left(i{\bm t}^{\top}{\bf V}_{T}^{-1/2}{ \bm Q}\right)\right\}\cdot\exp\left(-\frac{1}{2}{\bm t}^{\top}{\bf V}_{ T}^{-1/2}{\bf V}_{M}{\bf V}_{T}^{-1/2}{\bm t}\right)\\
		=&{\left\{\exp\left(i{\bm t}^{\top}{\bf V}_{ T}^{-1/2}{\bf V}_{ Q}{\bf V}_{T}^{-1/2}{\bm t}\right)+o(1)\right\} }\cdot\exp\left(-\frac{1}{2}{\bm t}^{\top}{\bf V}_{ T}^{-1/2}{\bf V}_{M}{\bf V}_{T}^{-1/2}{\bm t}\right)\\
		=&\left\{\exp\left(i{\bm t}^{\top}{\bf V}_{ T}^{-1/2}{\bf V}_{Q}{\bf V}_{ T}^{-1/2}{\bm t}\right)\right\}\cdot\exp\left(-\frac{1}{2}{\bm t}^{\top}{\bf V}_{T}^{-1/2}{\bf V}_{M}{\bf V}_{T}^{-1/2}{\bm t}\right)+o(1)\\
		=&\exp\left(-\frac{1}{2}{\bm t}^{\top}{\bm t}\right)+o(1),
	\end{align*}
	where the first equality holds by Lemma \ref{Ling-cond3}. And the third condition in Lemma \ref{martingaleCLT} is satisfied. Then by Lemma \ref{martingaleCLT} and \eqref{Asynormalfullestimate} we have
	\begin{align*}
		{\bf V}_{T}^{-1/2}{\bm Q}+{\bf V}_{T}^{-1/2}{ \bm M}={\bf V}_{ T}^{-1/2}{\bm T}{\rightarrow} \mathcal{N}\left({\bf 0},{\bf I}_{p+1}\right).
	\end{align*} \hfill$\Box$
\end{proof}

\bigskip
\noindent
\textbf{Proof of Theorem~\ref{martingalenormality}.} By Theorem \ref{Bahadur-type representation} and Lemma \ref{Asynormalsubestimate}, we have
\begin{align*}
	\sqrt{n}(\widetilde{\bm\beta}-{\bm\beta}^{\dagger}) = -\sqrt{n}{\bf H}(\bm\beta^{\dagger})^{-1}{\bm T}+o_p(1).
\end{align*}

It follows that
\begin{align*}
	{\bf V}_{T}^{-1/2}{ \bf H}(\bm\beta^{\dagger})(\widetilde{\bm\beta}-\bm\beta^{\dagger})+o_p(1)=-{\bf V}_{ T}^{-1/2}{\bm T}.
\end{align*}

By Lemma \ref{Asynormalsubestimate}, we have
\begin{align*}
	{\bf V}^{-1/2}(\widetilde{\bm\beta}-{\bm\beta}^{\dagger}) {\rightarrow} \mathcal{N}({\bf 0},{\bf I}_{p+1}),
\end{align*}
where ${\bf V}={\bf H}({\bm\beta}^{\dagger})^{-1}{\bf V}_{ T}{\bf H}({\bm\beta}^{\dagger})^{-1}$. $\hfill\square$

\section*{Appendix D: Proof of Theorem \ref{Samplingprob}}

\noindent
\textbf{Proof of Theorem~\ref{Samplingprob}.} Recall that ${{\X}}_1^N=\left({{\X}}_1,\ldots,{{\X}}_N\right)$ and ${{Y}}_1^N=\left({{Y}}_1,\ldots,{{Y}}_N\right)$, then  $\mathcal{D}_N = \left\{{{\X}}_1^N,{{Y}}_1^N\right\}$. {Let $\mathrm{var}(Y\mid \X)$ be the conditional variance of $Y$ given $\X$.} First we calculate  $\mathrm{var}({\bm T}\mid {{\X}}_1^N )$.
We have
\begin{align*}
	\mathrm{var}({\bm T}\mid {{\X}}_1^N) = {\E}_{Y\mid \X}\left\{\mathrm{var}({\bm T}\mid \mathcal{D}_N)\right\}+ \mathrm{var}_{Y\mid \X}\left\{{\E}({\bm T}\mid \mathcal{D}_N)\right\}.
\end{align*}

Some algebra yields
\begin{align}\label{varyE}
	\begin{split}
		\mathrm{var}_{Y\mid \X}\left\{{\E}({\bm T}\mid \mathcal{D}_N)\right\}
		&=\mathrm{var}_{Y\mid \X}\left(\frac{1}{N}\sum\limits_{j=1}^N\xi_jY_j\widetilde{{\X}}_j\right)\\
		&=\frac{1}{N^2}\sum\limits_{j=1}^N{\E}_{Y\mid \X}\left(\xi_j^2Y_j^2\widetilde{{\X}}_j\widetilde{{\X}}_j^{\top}\right)-\frac{1}{N^2}\sum\limits_{j=1}^N\left\{{\E}_{Y\mid \X}(\xi_jY_j{\widetilde{\X}}_j)\right\}^2\\
		&=\frac{1}{N^2}\sum\limits_{j=1}^N{\E}_{Y\mid \X}\left(\xi_j\widetilde{{\X}}_j\widetilde{{\X}}_j^{\top}\right)-\frac{1}{N^2}\sum\limits_{j=1}^N\left\{{\E}_{Y\mid \X}(\xi_jY_j{\widetilde{\X}}_j)\right\}^2,
	\end{split}
\end{align}
where the third equality holds by the fact that $\xi_j^2=\xi_j$ and $Y_j^2=1$. Next
\begin{align}\label{Eyvar}
	\begin{split}
		{\E}_{Y\mid \X}\left\{\mathrm{var}(\bm T\mid \mathcal{D}_N)\right\}
		&=\frac{1}{nN^2}\sum\limits_{j=1}^N{\E}_{Y\mid \X}\left\{\pi_j\left(\frac{1}{\pi_j^2}\xi_j^2Y_j^2\widetilde{{{\X}}}_j\widetilde{{{\X}}}_j^\top\right)\right\}-\frac{1}{nN}\sum\limits_{j=1}^N\left\{{\E}_{Y\mid \X}(\xi_jY_j\widetilde{{{\X}}}_j)\right\}^2\\
		&=\frac{1}{nN^2}\sum\limits_{j=1}^N{\E}_{Y\mid \X}\left\{\frac{1}{\pi_j}\xi_j\widetilde{{\X}}_j\widetilde{{\X}}_j^\top\right\}-\frac{1}{nN}\sum\limits_{j=1}^N\left\{{\E}_{Y\mid \X}(\xi_jY_j\widetilde{{{\X}}}_j)\right\}^2.
	\end{split}
\end{align}

In view of (\ref{varyE}) and (\ref{Eyvar}), we  get
\begin{align*}
	\mathrm{var}({\bm T}\mid {{\X}}_1^N) =&\frac{1}{nN^2}\sum\limits_{j=1}^N{\E}_{Y\mid \X}\left(\frac{1}{\pi_j}\xi_j\widetilde{{\X}}_j\widetilde{{\X}}_j^\top\right)+ \frac{1}{N^2}\sum\limits_{j=1}^N{\E}_{Y\mid \X}\left(\xi_j\widetilde{{\X}}_j\widetilde{{\X}}_j^\top\right)\\
	&-\frac{1}{N}\sum\limits_{j=1}^N\left\{{\E}_{Y\mid \X}\left(\xi_jY_j\widetilde{{{\X}}}_j\right)\right\}^2\left(\frac{1}{N}+\frac{1}{n}\right).
\end{align*}

Next we calculate ${\bf V}_T$
through
\begin{align*}
	{\bf V}_{T}= {\E}\left\{\mathrm{var}( \bm T\mid{{\X}}_1^N)\right\} + \mathrm{var}\left\{{\E}( \bm T\mid {{\X}}_1^N)\right\}.
\end{align*}

A simple calculation shows that
\begin{align*}
	&{\E}(T\mid {{\X}}_1^N)={\E}\left\{{\E}( \bm T\mid {{\X}}_1^N,Y_1^N) \right\}=\frac{1}{N}\sum\limits_{j=1}^N{\E}_{Y\mid \X}\left(\xi_jY_j\widetilde{{\X}}_j\right),\\
	&\mathrm{var}\left\{{\E}(T\mid {{\X}}_1^N)\right\}=\frac{1}{N^2}\sum\limits_{j=1}^N{\E}_{Y\mid \X}\left(\xi_j\widetilde{{\X}}_j\widetilde{{\X}}_j^\top\right)-\frac{1}{N^2}\sum\limits_{j=1}^N\left\{{\E}_{Y\mid \X}\left(\xi_jY_j\widetilde{{\X}}_j\right)\right\}^2.
\end{align*}

Therefore, we have
\begin{align*}
	{\bf V}_{T} =\frac{1}{nN^2}\sum\limits_{j=1}^N{\E}_{Y\mid \X}\left(\frac{1}{\pi_j}\xi_j\widetilde{{\X}}_j\widetilde{{\X}}_j^\top\right)+{\bf C},
\end{align*}
where ${ \bf C}={2}{N^{-2}}\sum\nolimits_{j=1}^N\E_{Y\mid \X}\left(\xi_j\widetilde{{\X}}_j\widetilde{{\X}}_j^\top\right)-{N^{-1}}\sum\nolimits_{j=1}^N\left\{{\E}_{Y\mid \X}\left(\xi_jY_j\widetilde{{{\X}}}_j\right)\right\}^2\left({2}{N^{-1}}+{n^{-1}}\right)$ is a constant matrix that
does not depend on $\bm\pi$.

Let $\mathrm{tr}({\bf A})$ denotes the trace of matrix ${\bf A}$. We minimize $\mathrm{tr}({\bf V}_{T})$ to obtain the A-optimality subsampling probability

\begin{align*}
	\mathrm{tr}\left({\bf V}_{ T}\right)&=\frac{1}{nN^2}\sum\limits_{j=1}^N\mathrm{tr}\left\{\E_{Y\mid \X}\left(\frac{1}{\pi_j}\xi_j{ \bf H}(\bm\beta^{\dagger})^{-1}\widetilde{{\X}}_j\widetilde{{\X}}_j^\top {\bf H}(\bm\beta^{\dagger})^{-1}\right)\right\}+\mathrm{tr}({\bf C})\\
	&=\frac{1}{nN^2}\E_{Y\mid \X}\left\{{\sum\limits_{j=1}^N\pi_j}\sum\limits_{j=1}^N\left(\frac{1}{\pi_j}\xi_j\|{ \bf H}({\bm\beta}^{\dagger})^{-1}\widetilde{{\X}}_j\|^2\right)\right\}+\mathrm{tr}({\ \bf C})\\
	&\geq \frac{1}{nN^2}\left\{\sum\limits_{j=1}^N\P\left(Y_jf(\ {{\X}}_j,{\bm\beta}^{\dagger})\leq 1\right)\|{ \bf H}({\bm\beta}^{\dagger})^{-1}\widetilde{{\X}}_j\|\right\}^2+\mathrm{tr}({ \bf C}),
\end{align*}
where the last inequality follows from the Cauchy-Schwarz inequality, and the equality holds if and only if
\begin{align*}
	\pi_j^{\text{A}}\varpropto \I\left(Y_jf(\ {{\X}}_j,{\bm\beta}^{\dagger})\leq 1\right)\|{ \bf H}({\bm\beta}^{\dagger})^{-1}\widetilde{{\X}}_j\|.
\end{align*}

Note that  ${ \bf H}(\bm\beta^{\dagger})^{-1}\mathrm{var}(\bm T\mid {{\X}}_1^N){\bf H}(\bm\beta^{\dagger})^{-1}$ depends on subsampling probability $\pi$
only through $\mathrm{var}( \bm T\mid {{\X}}_1^N)$. Hence,
by the similar argument for minimizing $\mathrm{tr}\left\{\mathrm{var}(\bm T\mid {{\X}}_1^N)\right\}$,  we get the L-optimality subsampling probability
\begin{align*}
	\pi_j^{\text{L}}\varpropto \I\left(Y_jf(\ {{\X}}_j,{\bm\beta}^{\dagger})\leq 1\right)\|\widetilde{{\X}}_j\|.
\end{align*} $\hfill\square$

\section*{Appendix E: Additional simulation results}

\begin{figure}[htbp]
	\centering
	\setlength{\abovecaptionskip}{0cm}
	\setlength{\belowcaptionskip}{-0.cm}
	\includegraphics[width=1\textwidth]{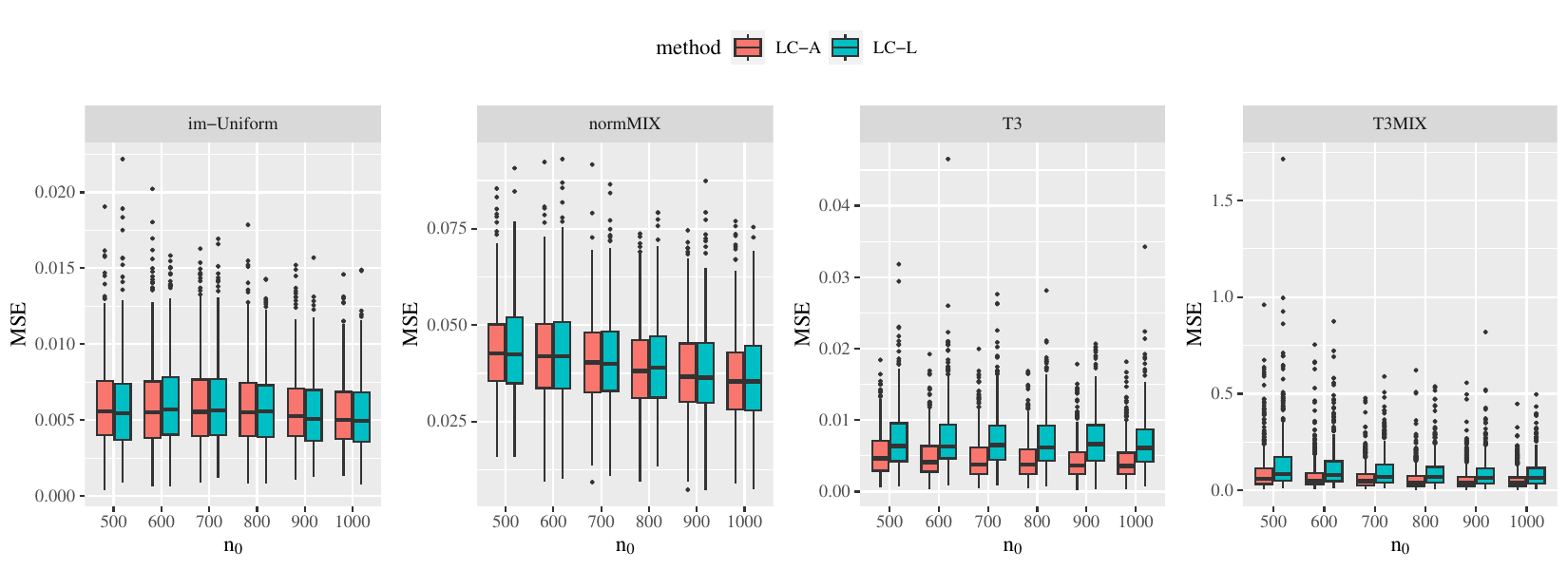}\par
	\vspace{-0.2cm}
	\caption{Comparison of MSE for approximating the full sample SVM estimator $\widehat{\bm\beta}$ with different pilot subsample sizes given 
		$n=1000$ under Scenarios I--IV.}\label{pilot-estimation}
\end{figure}

To assess the impact of the pilot study in our proposed algorithm, we conduct the following boxplot by 500 replications on the four scenarios presented in Section \ref{Simulation}. Figure \ref{pilot-estimation} reveals that the MSE is not sensitive to the pilot subsample size $n_0$. 
As $n_0$ increases, the boxplot shows a slight decrease in MSE, suggesting that a smaller pilot subsample size can reduce computational costs without significantly compromising accuracy.

\begin{figure}[htbp]
	\centering
	\setlength{\abovecaptionskip}{0cm}
	\setlength{\belowcaptionskip}{-0.cm}
	\includegraphics[angle=0,scale=0.65]{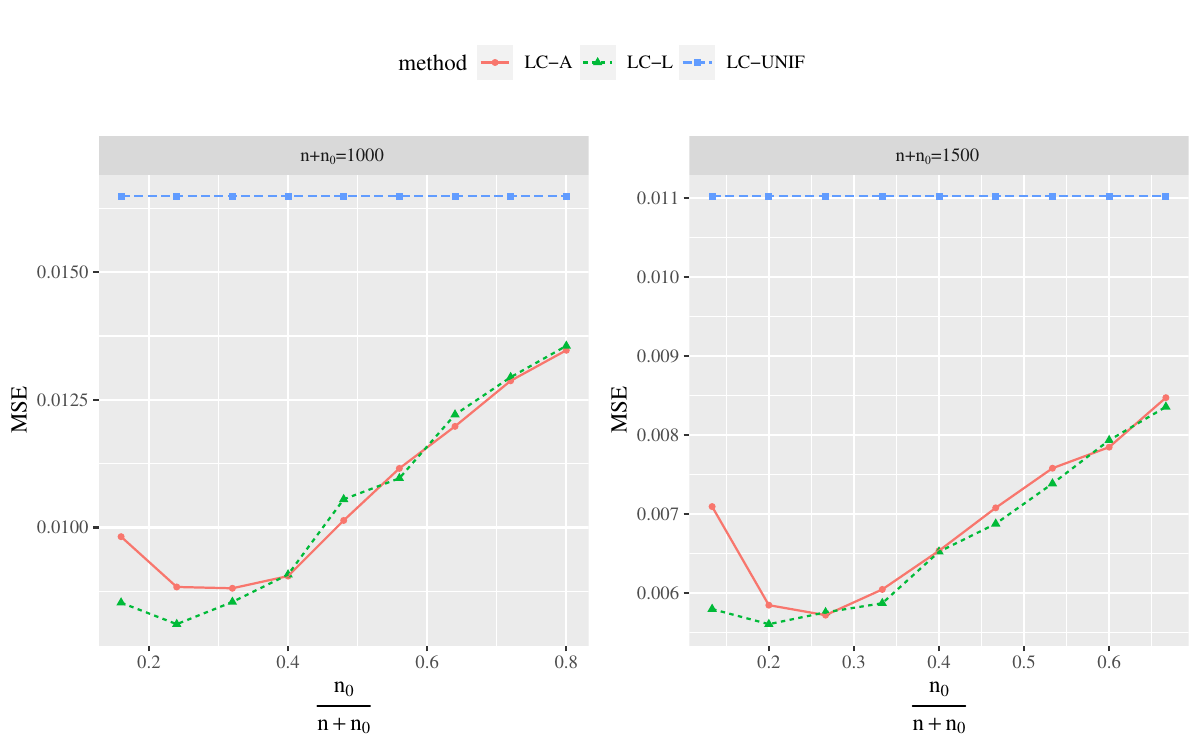}\par
	\vspace{-0.2cm}
	\caption{Comparison of mean squared errors (MSEs) for approximating the full sample SVM estimator $\widehat{\bm\beta}$ with different subsample size allocations under Scenario I.}\label{unif-fix}
\end{figure}

Moreover, we fix the total subsample size of $n+n_0$ and vary the proportions of $n$ and $n_0$. It provides practical guidelines on allocating subsamples in two steps. We evaluate both $\widehat{\bm\pi}^{\text{A}}$ and $\widehat{\bm\pi}^{\text{L}}$ and the results are presented in Figure \ref{unif-fix} under Scenario I. It illustrates that the MSEs increase when $n_0$ is either too small or too large. This is because that if $n_0$ is too small, the pilot estimate is not accurate, and thus the optimal subsampling probabilities may not be well approximated; on the other hand, if $n_0$ is too large, there is not enough sampling budget to select informative subsample in subsequent steps. Figure \ref{unif-fix} shows that our methods perform well when the ratio $n_0/(n+n_0)$ is around $(0.2,0.4)$. Therefore, we use $n_0=500$ in our simulation studies with $N=10^5$. 

Bandwidth selection is a critical issue in nonparametric estimation. In Table \ref{Tab:BandwidthSelector}, we compare the MSE and accuracy of LC-A with three bandwidth selectors: Silverman's rule of thumb   \citep[ROT,][]{silverman1986density},  Sheather and Jones method,  \citep[SJ,][]{sheather1991reliable}, and biased cross-validation, \citep[BCV,][]{scott1987biased}. Clearly, The results demonstrate that the choice of bandwidth selector has a negligible impact on the empirical MSE and accuracy. To this end, we employ the commonly-used bandwidth selector, Silverman's rule of thumb  \citep{silverman1986density}, in our numerical analysis.

\begin{table}[htbp]
	\tabcolsep 6pt
	\caption{Comparison of MSE (1$0^{-2}$) and prediction accuracy (\%) for LC-A against different bandwidth selectors under Scenarios I--II when $n=1000$.}
	\vspace{0.3cm}
	\label{Tab:BandwidthSelector}
	\centering
	{\centering
		\scalebox{1}{
			\begin{tabular}{lcrrrrrrrrrrrrrrrr}
			\hline
				& & \multicolumn{2}{c}{\textbf{ROT}} &&\multicolumn{2}{c}{\textbf{SJ}}
				&&\multicolumn{2}{c}{\textbf{BCV}}\\
				\cline{3-4} \cline{6-7} \cline{9-10}  
				\text{Scenario}&$n_0$& MSE&Accuracy&&MSE&Accoracy&&MSE&Accuracy\\
				\hline
				&300&0.68&95.54&&0.92&94.52&& 0.65&94.56\\
				{\bf im-Uniform}&400&0.64&94.53&&0.85&94.52&&0.61&94.54\\
				&500&0.60&94.53&&0.75&94.52&&0.60&94.53\\
				\hline
				&300& 4.84&97.52&&4.89&97.52&&4.87&97.52\\
				{\bf normMIX}&400& 4.49&97.53&&4.63&97.53&&4.56&97.53\\
				&500&4.33&97.54&&4.43&97.54&&4.35&97.54\\
				\hline
	\end{tabular}}}\\
\end{table}


	\end{CJK*}	
\end{document}